\DeclareMathOperator*{\argmin}{\arg\!\min}
\newtheorem{assumption}{Assumption}
\newtheorem{theorem}{Theorem}
\newtheorem{corollary}{Corollary}
\newtheorem{proposition}{Proposition}
\newcommand{\rone}[1]{#1}
\newcommand{\rtwo}[1]{#1}
\newcommand{\blind}{0}
\begin{document}

\def\spacingset#1{\renewcommand{\baselinestretch}%
{#1}\small\normalsize} \spacingset{1}


\if0\blind
{
  \title{\bf Cross-Validated Loss-Based Covariance Matrix Estimator Selection
  in High Dimensions}
\author{
  {\normalsize Philippe Boileau}\\
    {\normalsize Graduate Group in Biostatistics and Center for Computational
     Biology, UC Berkeley} \\
  {\normalsize Nima S.\ Hejazi}\\
    {\normalsize Division of Biostatistics, Department of Population Health
        Sciences, Weill Cornell Medicine}\\
  {\normalsize Mark J.~van der Laan}\\
    {\normalsize Division of Biostatistics, Department of Statistics,}\\
    {\normalsize and Center for Computational Biology, UC Berkeley}\\
  {\normalsize Sandrine Dudoit}\thanks{To whom correspondence should be
    addressed: sandrine@stat.berkeley.edu}\hspace{0.2cm}\\
    {\normalsize Department of Statistics, Division of Biostatistics,}\\
    {\normalsize and Center for Computational Biology, UC Berkeley}
}
  \maketitle
} \fi

\if1\blind
{
  \bigskip
  \bigskip
  \bigskip
  \begin{center}
    {\LARGE\bf Cross-Validated Loss-Based Covariance Matrix Estimator Selection
      in High Dimensions}
\end{center}
  \medskip
} \fi
\vspace{-1cm}
\begin{abstract}
  The covariance matrix plays a fundamental role in many modern exploratory and
  inferential statistical procedures, including dimensionality reduction,
  hypothesis testing, and regression. In low-dimensional regimes, where the
  number of observations far exceeds the number of variables, the optimality of
  the sample covariance matrix as an estimator of this parameter is
  well-established. High-dimensional regimes do not admit such a convenience.
  Thus, a variety of estimators have been derived to overcome the shortcomings
  of the canonical estimator in such settings. Yet, selecting an optimal
  estimator from among the plethora available remains an open challenge. Using
  the framework of cross-validated loss-based estimation, we develop the
  theoretical underpinnings of just such an estimator selection procedure. We
  propose a general class of loss functions for covariance matrix estimation
  and establish accompanying finite-sample risk bounds and conditions for the
  asymptotic optimality of the cross-validation selector. In numerical
  experiments, we demonstrate the optimality of our proposed selector in
  moderate sample sizes and across diverse data-generating processes. The
  practical benefits of our procedure are highlighted in a dimension reduction
  application to single-cell transcriptome sequencing data.
\end{abstract}

\noindent%
{\it Keywords:} covariance matrix estimation; cross-validation; dimension
reduction; high-dimensional statistics; loss-based estimation 


\newpage
\spacingset{1.5} 

\makeatletter\@input{get-os-aux.tex}\makeatother

\section{Introduction}
\label{intro}

The covariance matrix underlies numerous exploratory and inferential statistical
procedures central to analyses regularly performed in diverse fields. For
instance, in computational biology, this statistical parameter serves as a key
ingredient in many popular dimensionality reduction, clustering, and
classification methods which are regularly relied upon in quality control
assessments, exploratory data analysis, and, recently, the discovery and
characterization of novel types of cells. Other important areas in which the
covariance matrix is critical include financial economics, climate modeling and
weather forecasting, imaging data processing and analysis, probabilistic
graphical modeling, and text corpora compression and retrieval. Even more
fundamentally, the covariance matrix plays a key role in assessing the strengths
of linear relationships within multivariate data, in generating simultaneous
confidence bands and regions, and in the construction and evaluation of
hypothesis tests. Accurate estimation of this parameter is therefore essential.

When the number of observations in a data set far exceeds the number of
features, the estimator of choice for the covariance matrix is the sample
covariance matrix: it is an efficient estimator under minimal regularity
assumptions on the data-generating distribution
\rone{\citep{anderson2003,smith2005}}. In high-dimensional regimes, however,
this simple estimator has undesirable properties. When the number of features
outstrips the number of observations, the sample covariance matrix is singular.
Even when the number of observations slightly exceeds the number of features,
the sample covariance matrix is numerically unstable on account of an overly
large condition number~\citep{golub1996matrix}. Its eigenvalues are also
generally over-dispersed when compared to those of the population covariance
matrix~\citep{johnstone2001distribution,ledoit:2004}: the leading eigenvalues
are positively biased, while the trailing eigenvalues are negatively biased
\rone{\citep{marchenko1967}}.

High-dimensional data have become increasingly widespread in myriad scientific
domains, with many examples arising from challenges posed by cutting-edge
biological sequencing technologies. Accordingly, researchers have turned to
developing novel covariance matrix estimators to remediate the deficiencies of
the sample covariance matrix. Under certain sparsity assumptions,
\citet{bickela:2008,bickelb:2008,rothman:2009,lam:2009,cai:2010}, and
\citet{cai:2011}, among others, demonstrated that the true covariance matrix can
be estimated consistently under specific losses by applying element-wise
thresholding or tapering functions to the sample covariance matrix. Another
thread of the literature, which includes notable contributions by
\citet{stock:2002, bai:2003, fan:2008, fan:2013, fan:2016b, fan:2019}, and
\citet{onatski:2012}, has championed methods employing factor models in
covariance matrix estimation. Other popular proposals include the families of
estimators inspired by the empirical Bayes
framework~\citep{robbins1964empirical, efron2012large}, formulated by
\citet{shafer:2005} and \citet{ledoit:2004, ledoit:2012, ledoit:2015,
ledoit:2020b}. We briefly review several of these estimator families in
Section~\ref{supp:lit} of the Online Supplement.

\rtwo{Despite the flexibility afforded by the apparent wealth of candidate
estimators, this variety poses many practical issues. Namely, identifying the
most appropriate estimator from among a collection of candidates is itself a
significant challenge. A partial answer to this problem has come in the form
of data-adaptive approaches designed to select the optimal estimator within a
particular class~\rone{\citep[for
example,][]{bickela:2008,bickelb:2008,cai:2011,fan:2013,fang:2015,bartz2016}}.
Such approaches, however, are inherently limited by their focus on relatively
narrow families of covariance matrix estimators. The successful application
of such estimator selection frameworks requires, as a preliminary step, that
the practitioner make a successful choice among estimator families, injecting
a degree of subjectivity in their deployment. The broader question of
selecting an optimal estimator from among a diverse library of candidates has
remained unaddressed. We offer a general loss-based framework building upon
the seminal work of \citet{laan-dudoit:2003b, dudoit2005asymptotics,
vaart:2006} for asymptotically optimal covariance matrix estimator selection
based upon cross-validation.}

In the cross-validated loss-based estimation framework, the parameter of
interest is defined as the risk minimizer, with respect to the data-generating
distribution, based on a loss function chosen to reflect the problem at hand.
Candidate estimators may be generated in a variety of manners, including as
empirical risk minimizers with respect to an empirical distribution over
parameter subspaces corresponding to models for the data-generating
distribution. One would ideally select as optimal estimator that which
minimizes the ``true'' risk with respect to the data-generating distribution.
However, as this distribution is typically unknown, one turns to
cross-validation for risk estimation. \citet{laan-dudoit:2003b,
dudoit2005asymptotics, vaart:2006} have shown that, under general conditions on
the data-generating distribution and loss function, the cross-validated
estimator selector is asymptotically optimal in the sense that it performs
asymptotically as well in terms of risk as an optimal oracle selector based on
the true, unknown data-generating distribution. \rone{These results extend prior
work summarized by~\citet[][Ch.~7--8]{gyorfi2002}.}

\rtwo{Focusing specifically on the covariance matrix as the parameter of
interest, we address the choice of loss function and candidate estimators,
and derive new, high-dimensional asymptotic optimality results for
multivariate cross-validated estimator selection procedures. Requiring
generally nonrestrictive assumptions about the structure of the true
covariance matrix, the proposed framework accommodates arbitrary families of
covariance matrix estimators. The method therefore enables the objective
selection of an optimal estimator while fully taking advantage of the
plethora of available estimators. As such, it generalizes existing, but more
limited, data-adaptive estimator selection frameworks where the library of
candidate estimators is narrowed based on available subject matter knowledge,
or, as is more commonly the case, for convenience's sake.}

\section{Problem Formulation and Background}
\label{backg}

Consider a data set $\mathbf{X}_{n \times J} = \{X_1, \ldots, X_n: X_i \in
\mathbb{R}^{J}\}$, comprising $n$ independent and identically distributed
(i.i.d.)~random vectors, where $n \approx J$ or $n < J$. Let $X_i \sim P_0 \in
\mathcal{M}$, where $P_0$ denotes the true data-generating distribution and
$\mathcal{M}$ the statistical model, that is, a collection of possible
data-generating distributions $P$ for $X_i$. We assume a nonparametric
statistical model $\mathcal{M}$ for $P_0$, minimizing assumptions on
the form of $P_0$. We denote by $P_n$ the empirical distribution of the $n$
random vectors forming $\mathbf{X}_{n \times J}$. Letting $\mathbb{E}[X_i] = 0$
without loss of generality and defining $\boldsymbol{\psi}_0 \coloneqq
\text{Var}[X_i]$, we take as our goal the estimation of the covariance matrix
$\boldsymbol{\psi}_0$. This is accomplished by identifying the ``optimal''
estimator of $\boldsymbol{\psi}_0$ from among a collection of candidates,
where, as detailed below, optimality is defined in terms of risk.

For any distribution $P \in \mathcal{M}$, define its covariance matrix as
$\boldsymbol{\psi} = \Psi(P)$, where $\Psi$ is a mapping from the model
$\mathcal{M}$ to the set of $J \times J$ symmetric, positive semi-definite
matrices. \rone{Furthermore, candidate estimators of the covariance matrix are
  defined as $\hat{\boldsymbol{\psi}}_k \coloneqq \hat{\Psi}_k(P_n)$ for $k =
  1, \ldots, K$ in terms of mappings $\hat{\Psi}_k$ from the empirical
  distribution $P_n$ to $\boldsymbol{\Psi} \coloneqq \{\boldsymbol{\psi} \in
\mathbb{R}^{J\times J} | \boldsymbol{\psi} = \boldsymbol{\psi}^\top\}$}. While
this notation suggests that the number of candidate estimators $K$ is fixed,
and we treat it as such throughout, this framework may be extended such that
$K$ grows as a function of $n$ and $J$. \rone{It also follows that
  $\{\boldsymbol{\psi} = \Psi(P): P \in \mathcal{M}\} \subset
  \boldsymbol{\Psi}$; that is, the set of all covariance matrices corresponding
to the data-generating distributions $P$ belonging to the model $\mathcal{M}$
is a subset of $\boldsymbol{\Psi}$.}

In order to assess the optimality of estimators in the set $\boldsymbol{\Psi}$,
we introduce a generic loss function $L(X; \boldsymbol{\psi}, \eta)$
characterizing a cost applicable to any $\boldsymbol{\psi} \in
\boldsymbol{\Psi}$ and $X \sim P \in \mathcal{M}$, and where $\eta$ is a
(possibly empty) nuisance parameter. Specific examples of loss functions for
the covariance estimation setting are proposed in Section~\ref{loss-funs}.
Define $H$ as the mapping from the model $\mathcal{M}$ to the nuisance
parameter space $\mathbf{H} \coloneqq \{\eta = H(P): P \in \mathcal{M}\}$ and
let $\hat{\eta} \coloneqq \hat{H}(P_n)$ denote a generic nuisance parameter
estimator, where $\hat{H}$ is a mapping from $P_n$ to $\mathbf{H}$. Given any
$\eta \in \mathbf{H}$, the risk under $P \in \mathcal{M}$ for any
$\boldsymbol\psi \in \boldsymbol\Psi$ is defined as the expected value of $L(X;
\boldsymbol{\psi}, \eta)$ with respect to $P$:

\begin{equation*}
  \begin{split}
    \Theta(\boldsymbol\psi, \eta, P) & \coloneqq \int L(x; \boldsymbol{\psi},
      \eta)dP(x) \\
    & = \mathbb{E}_{P} \left[L(X; \boldsymbol{\psi}, \eta)\right].
  \end{split}
\end{equation*}

Under the additional constraint on the loss function that a risk minimizer
exists under the true data-generating distribution $P_0$, the minimizer is given
by the parameter of interest
\begin{equation}\label{risk-min-def}
  \boldsymbol{\psi}_0 \coloneqq
    \argmin_{\boldsymbol{\psi} \in \mathbf{\Psi}} \Theta(\boldsymbol\psi, \eta_0, P_0),
\end{equation}
where $\eta_0 \coloneqq H(P_0)$. The risk minimizer need not be unique. The
optimal risk under $P_0$ is
\begin{equation*} \theta_0 \coloneqq
  \min_{\boldsymbol{\psi} \in \mathbf{\Psi}} \; \Theta(\boldsymbol\psi, \eta_0,
  P_0),
\end{equation*}
which is to say that a risk minimizer $\boldsymbol\psi_0$ attains risk
$\theta_0$.

For any given estimator $\hat{\boldsymbol{\psi}}_k$ of $\boldsymbol{\psi}_0$,
its conditional risk given $P_n$ with respect to the true data-generating
distribution $P_0$ is
\begin{equation*}
  \begin{split}
    \tilde{\theta}_n(k, \eta_0) & \coloneqq \mathbb{E}_{P_0}
      [L(X; \hat{\Psi}_k(P_n), \eta_0) \mid P_n] \\
    & = \Theta(\hat{\boldsymbol{\psi}}_k, \eta_0, P_0).
  \end{split}
\end{equation*}
Defining the risk difference of the $k^\text{th}$ estimator as
$\tilde{\theta}_n(k, \eta_0) - \theta_0$, the index of the estimator that
achieves the minimal risk difference is
\begin{equation*}
    \tilde{k}_n \coloneqq \argmin_{k \in \{1,\ldots,K\}} \;
    \tilde{\theta}_n(k, \eta_0) - \theta_0.
\end{equation*}
The subscript $n$ emphasizes that the risk and optimal estimator index are
conditional on the empirical distribution $P_n$. They are therefore random
variables.

Given the high-dimensional nature of the data, it is generally most convenient
to study the performance of estimators of $\boldsymbol{\psi}_0$ using
\textit{Kolmogorov asymptotics}, that is, in the setting in which both $n
\rightarrow \infty$ and $J \rightarrow \infty$ such that $J/n \rightarrow m <
\infty$. Historically, estimators have been derived within this
high-dimensional asymptotic regime to improve upon the finite sample results of
estimators brought about by traditional asymptotic arguments. After all, the
sample covariance matrix retains its asymptotic optimality properties when $J$
is fixed, even though it is known to perform poorly in high-dimensional
settings.

Naturally, it would be desirable for an estimator selection procedure to select
the estimator indexed by $\tilde{k}_n$; however, this quantity depends on the
true, unknown data-generating distribution $P_0$. As a substitute for the
candidates' true conditional risks, we employ instead the cross-validated
estimators of these same conditional risks.

Cross-validation (CV) consists of randomly, and possibly repeatedly,
partitioning a data set into a training set and a validation set. The
observations in the training set are fed to the candidate estimators and the
observations in the validation set are used to evaluate the performance of
these estimators~\citep{breiman1992submodel, friedman2001elements}. A range of
CV schemes have been proposed and investigated, both theoretically and
computationally; \citet{dudoit2005asymptotics} provide a thorough review of
popular CV schemes and their properties. Among the variety, $V$-fold stands out
as an approach that has gained traction on account of its relative
computational feasibility and good performance. Any CV scheme can be expressed
in terms of a binary random vector $B_n$, which assigns observations into
either the training or validation set. Observation $i$ is said to lie in the
training set when $B_n(i) = 0$ and in the validation set otherwise. The
training set therefore contains $\sum_i (1 - B_n(i)) = n (1 - p_n)$
observations and the validation set $\sum_i B_n(i) = n p_n$ observations, where
$p_n$ is the fixed validation set proportion corresponding to the chosen CV
procedure. The empirical distributions of the training and validation sets are
denoted by $P_{n, B_n}^0$ and $P_{n, B_n}^1$, respectively, for any given
realization of $B_n$. \rone{$B_n$, we emphasize, is independent of $P_n$.}

Using this general definition, the cross-validated estimator of a candidate
$\hat{\Psi}_k$'s risk is
\begin{equation*}
  \begin{split}
    \hat{\theta}_{p_n,n}(k, \hat{H}(P_{n,B_n}^0)) & \coloneqq \mathbb{E}_{B_n}\left[
      \Theta(\hat{\Psi}_k(P_{n,B_n}^0), \hat{H}(P_{n,B_n}^0),
      P_{n,B_n}^1)\right] \\
    & =  \mathbb{E}_{B_n} \left[
      \frac{1}{np_n}\sum_{i=1}^n
      \mathbb{I}(B_n(i) = 1) L(X_i; \hat{\Psi}_k(P^0_{n, B_n}),
      \hat{H}(P_{n, B_n}^0))\right],
  \end{split}
\end{equation*}
for a nuisance parameter estimator mapping $\hat{H}$. \rone{Here,
$\mathbb{E}_{B_n}[\cdot]$ denotes the expectation with respect to $B_n$}. The
corresponding cross-validated selector is
\begin{equation*}
  \hat{k}_{p_n,n} \coloneqq \argmin_{k \in \{1, \ldots, K\}}
    \hat{\theta}_{p_n,n}(k, \hat{H}(P_{n,B_n}^0)).
\end{equation*}
 
As a benchmark, the unknown cross-validated conditional risk of the
$k$\textsuperscript{th} estimator is
\begin{equation*}
  \tilde{\theta}_{p_n,n}(k, \eta_0) \coloneqq
    \mathbb{E}_{B_n}\left[
    \Theta(\hat{\Psi}_k(P_{n, B_n}^0), \eta_0, P_0)\right].
\end{equation*}
The cross-validated oracle selector is then
\begin{equation*}
  \tilde{k}_{p_n, n} \coloneqq \argmin_{k \in \{1, \ldots, K\}}
    \tilde{\theta}_{p_n,n}(k, \eta_0).
\end{equation*}
As before, the $p_n$ and $n$ subscripts highlight the dependence of these
objects on the CV procedure and the empirical distribution $P_n$,
respectively, thus making them random variables.

Ideally, the cross-validated estimator selection procedure should identify a
$\hat{k}_{p_n,n}$ that is asymptotically (in $n$, $J$, and possibly $K$)
equivalent in terms of risk to the oracle $\tilde{k}_{p_n,n}$, under a set of
nonrestrictive assumptions based on the choice of loss function, target
parameter space, estimator ranges, and, if applicable, nuisance parameter
space, in the sense that
\begin{equation}\label{intermediate-ratio}
  \frac{\tilde{\theta}_{p_n,n}(\hat{k}_{p_n,n}, \eta_0) - \theta_0}
    {\tilde{\theta}_{p_n,n}(\tilde{k}_{p_n,n}, \eta_0) - \theta_0} \overset{P}{\longrightarrow} 1
  \text{ as } n, J \rightarrow \infty.
\end{equation}
That is, the estimator selected via CV is equivalent in terms of risk to
the CV scheme's oracle estimator chosen from among all candidates.

When Equation~\eqref{intermediate-ratio} holds, a further step may be taken by
relating the performance of the cross-validated selector to that of the
full-dataset oracle selector, $\tilde{k}_n$:
\begin{equation}\label{goal-ratio}
  \frac{\tilde{\theta}_{n}(\hat{k}_{p_n,n}, \eta_0) - \theta_0}
    {\tilde{\theta}_{n}(\tilde{k}_{n}, \eta_0) - \theta_0}
    \overset{P}{\longrightarrow} 1
    \text{ as } n, J \rightarrow \infty.
\end{equation}

When the cross-validated selection procedure's full-dataset conditional risk
difference converges in probability to that of the full-dataset oracle's, the
chosen estimator is \textit{asymptotically optimal}. In other words, the
data-adaptively selected estimator performs asymptotically as well, with respect
to the chosen loss, as the candidate that would be picked from the collection of
estimators if the true data-generating distribution were known.

\section{Loss Functions and Estimator Selection}
\label{losses}

\subsection{Proposed Loss Function}
\label{loss-funs}

\rtwo{The choice of loss function should reflect the goals of the estimation
task.} While loss functions based on the sample covariance matrix and either the
Frobenius or the spectral norms are often employed in the covariance matrix
estimation literature, \citet{dudoit2005asymptotics}'s estimator selection
framework is more amenable to loss functions that operate over random vectors.
Accordingly, we propose the observation-level Frobenius loss:
\begin{equation}\label{observation-level-frob-loss}
  \begin{split}
    L(X; \boldsymbol{\psi}, \eta_0) & \coloneqq \lVert X X^\top -
    \boldsymbol{\psi}
      \rVert^2_{F, \eta_0} \\
    & = \sum_{j=1}^J\sum_{l=1}^J \eta_0^{(jl)} (X^{(j)} X^{(l)} -
      \psi^{(jl)})^2,
  \end{split}
\end{equation}
where $X^{(j)}$ is the $j$\textsuperscript{th} element of a random vector $X
\sim P \in \mathcal{M}$, $\psi^{(jl)}$ is the entry in the
$j$\textsuperscript{th} row and $l$\textsuperscript{th} column of an arbitrary
covariance matrix $\boldsymbol{\psi} \in \mathbf{\Psi}$, and $\eta_0$ is a $J
\times J$ matrix acting as a scaling factor, that is, a potential nuisance
parameter. For an estimator $\hat{\eta}$ of $\eta_0$, the cross-validated
risk estimator of the $k$\textsuperscript{th} candidate estimator
$\hat{\Psi}_k$ under the observation-level Frobenius loss is
\begin{equation*}
  \begin{split}
    \hat{\theta}_{p_n,n}(k, \hat{H}(P_{n, B_n}^0)) &
    = \mathbb{E}_{B_n} \left[\frac{1}{np_n} \sum_{i = 1}^n
      \mathbb{I}(B_n(i) = 1)
      \lVert X_i X^\top_i - \hat{\Psi}_k(P_{n,B_n}^0)
      \rVert^2_{F, \hat{H}(P_{n, B_n}^0)}\right].
  \end{split}
\end{equation*}

\citet{ledoit:2004}, \citet{bickelb:2008}, and \citet{rothman:2009}, among
others, have employed analogous (scaled) Frobenius losses to prove various
optimality results, defining $\eta_0^{(jl)} = 1/J$, $\forall j, l$. This
particular choice of scaling factor is such that whatever the value of $J$,
$\lVert \mathbf{I}_{J \times J} \rVert_{F, \eta_0} = 1$. With such a scaling
factor, the loss function may be viewed as a relative loss whose yardstick is
the $J \times J$ identity matrix. A similarly reasonable option for when the
true covariance matrix is assumed to be dense is $\eta_0^{(jl)} = 1/J^2$. This
weighting scheme effectively computes the average squared error across every
entry of the covariance matrix; however, when the scaling factor is constant, it
only impacts the interpretation of the loss. Constant scaling factors have no
impact on our asymptotic analysis. Since it need not be estimated, it is not
a nuisance parameter in the conventional sense.

When the scaling factor of Equation~\eqref{observation-level-frob-loss} is
constant, the risk minimizers are identical for the cross-validated
observation-level Frobenius risk and the common cross-validated Frobenius risk
\rone{
\citep[used by, for example,][]{bickelb:2008,rothman:2009,fan:2013,fang:2015}}.

\begin{proposition}\label{prop:frob}
  Define the cross-validated Frobenius risk for an estimator $\hat{\Psi}_k$ as
  \begin{equation}\label{matrix-frob-risk}
    \hat{R}_n(\hat{\Psi}_k, \eta_0) \coloneqq \mathbb{E}_{B_n}\left[\lVert
      \mathbf{S}_n(P_{n,B_n}^1) - \hat{\Psi}_k(P_{n,B_n}^0)\rVert^2_{F, \eta_0}
   \right],
  \end{equation}
  where $\mathbf{S}_n(P_{n,B_n}^1)$ is the sample covariance matrix computed
  over the validation set $P_{n,B_n}^1$, and $\eta_0$ is some constant scaling
  matrix. \rone{Then, $\hat{R}_n(\hat{\Psi}_k, \eta_0) -
  \hat{\theta}_{p_n,n}(k,\eta_0)$ is constant with respect to
  $\hat{\Psi}_k(P_{n,B_n}^0)$ such that}
  \begin{equation*}
    \begin{split}
      \hat{k}_{p_n,n} & = \argmin_{k \in \{1, \ldots, K\}}
      \hat{\theta}_{p_n,n}(k, \eta_0) \\
                      & = \argmin_{k \in \{1, \ldots, K\}}
                      \hat{R}_n(\hat{\Psi}_k, \eta_0).
    \end{split}
  \end{equation*}
\end{proposition}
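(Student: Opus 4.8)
The plan is to prove a split-by-split algebraic identity: for a constant (equivalently, $k$-independent) scaling matrix $\eta_0$, the validation-set average of the observation-level Frobenius loss equals the matrix-level Frobenius discrepancy between the candidate $\hat{\Psi}_k(P_{n,B_n}^0)$ and the validation-set sample covariance matrix $\mathbf{S}_n(P_{n,B_n}^1)$, plus a remainder that does not involve the candidate. Taking the expectation over the split vector $B_n$ then shows that $k \mapsto \hat{\theta}_{p_n,n}(k, \eta_0)$ and $k \mapsto \hat{R}_n(\hat{\Psi}_k, \eta_0)$ differ by an additive constant, so they are minimized at the same index.

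First I would fix a realization of $B_n$ and an index $k$, and abbreviate $\hat{\boldsymbol\psi} \coloneqq \hat{\Psi}_k(P_{n,B_n}^0)$, emphasizing that this matrix is a function of the training sample only and is therefore held fixed in all computations over the validation index set $\mathcal{V} \coloneqq \{i : B_n(i) = 1\}$, which has cardinality $np_n$. Because $\mathbb{E}[X_i] = 0$ is assumed without loss of generality, the validation-set sample covariance matrix has $(j,l)$ entry $\mathbf{S}_n(P_{n,B_n}^1)^{(jl)} = (np_n)^{-1}\sum_{i \in \mathcal{V}} X_i^{(j)} X_i^{(l)}$. Then, for each coordinate pair $(j,l)$, I would apply the elementary identity $m^{-1}\sum_i (a_i - b)^2 = (\bar{a} - b)^2 + m^{-1}\sum_i (a_i - \bar{a})^2$ with $m = np_n$, $a_i = X_i^{(j)} X_i^{(l)}$ (so $\bar{a} = \mathbf{S}_n(P_{n,B_n}^1)^{(jl)}$), and $b = \hat\psi^{(jl)}$. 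Multiplying through by $\eta_0^{(jl)}$ and summing over $j$ and $l$ yields
\begin{equation*}
  \frac{1}{np_n} \sum_{i \in \mathcal{V}} \lVert X_i X_i^\top - \hat{\boldsymbol\psi} \rVert_{F,\eta_0}^2
  = \lVert \mathbf{S}_n(P_{n,B_n}^1) - \hat{\boldsymbol\psi} \rVert_{F,\eta_0}^2 + C(B_n),
\end{equation*}
where $C(B_n) \coloneqq \sum_{j,l} \eta_0^{(jl)} \bigl( (np_n)^{-1}\sum_{i \in \mathcal{V}} (X_i^{(j)} X_i^{(l)})^2 - (\mathbf{S}_n(P_{n,B_n}^1)^{(jl)})^2 \bigr)$ depends on the data and on the split $B_n$, but not on $\hat{\boldsymbol\psi}$, hence not on $k$.

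Finally I would take $\mathbb{E}_{B_n}$ of both sides of the display: the left-hand side becomes $\hat{\theta}_{p_n,n}(k, \eta_0)$, the first term on the right becomes $\hat{R}_n(\hat{\Psi}_k, \eta_0)$, and $\mathbb{E}_{B_n}[C(B_n)]$ is a constant independent of $k$. Consequently the two maps are minimized at the same index, which is exactly the asserted equality of selectors. There is no genuinely hard step; the only thing requiring care is the bookkeeping that verifies $C(B_n)$ is free of the candidate being evaluated — which is precisely why the mean-zero / plug-in form of $\mathbf{S}_n$ on the validation fold is used — together with the observation that constancy of $\eta_0$ (or, more weakly, that it does not vary with $k$) is what permits the remainder to be pulled out of the expectation as a term the selector can ignore.
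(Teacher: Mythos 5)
Your proposal is correct and follows essentially the same route as the paper: the paper expands both $\hat{\theta}_{p_n,n}(k,\eta_0)$ and $\hat{R}_n(\hat{\Psi}_k,\eta_0)$ coordinatewise and shows they share the same $k$-dependent terms up to additive constants, which is exactly the content of your mean-of-squares identity applied entrywise with the uncentered validation-fold covariance. Your packaging via $m^{-1}\sum_i(a_i-b)^2=(\bar a-b)^2+m^{-1}\sum_i(a_i-\bar a)^2$, keeping $\eta_0^{(jl)}$ general rather than setting it to $1$, is a minor stylistic difference, not a different argument.
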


Note that the traditional Frobenius loss corresponds to the sum of the squared
eigenvalues of the difference between the sample covariance matrix and the
estimate. Proposition~\ref{prop:frob} therefore implies the existence of a
similar relationship for our observation-level Frobenius loss. It may therefore
serve as a surrogate for a loss based on the spectral norm.

We are not restricted to a constant scaling factor matrix. One might consider
weighting the covariance matrix's off-diagonal elements' errors by their
corresponding diagonal entries, especially useful when the random variables are
of different scales. Such a scaling factor might offer a more equitable
evaluation across all entries of the parameter:
\begin{equation*}
  \begin{split}
      L_{\text{weighted}}(X; \boldsymbol\psi, \eta_0) & \coloneqq
      \sum_{j=1}^J \sum_{l=1}^J 
        \frac{1}{\sqrt{\psi_0^{(jj)}\psi_0^{(ll)}}}
        (X^{(j)}X^{(l)}-\psi^{(jl)})^2.
  \end{split}
\end{equation*}
Here, $\eta_0=\text{diag}(\boldsymbol\psi_0)$ is a genuine nuisance parameter
which can be estimated via the diagonal entries of the sample covariance
matrix.

Finally, the covariance matrix $\boldsymbol{\psi}_0$ is the risk minimizer of
the observation-level Frobenius loss if the integral with respect to $X$ and
the partial differential operators with respect to $\boldsymbol\psi$ are
interchangeable.
\rone{
\begin{proposition}\label{prop:minimizer}
  Let the integral with respect to $X$ and the partial differential operators
  with respect to $\boldsymbol\psi$ be interchangeable, and let $\eta$ be some
  fixed $J\times J$ matrix. Then
  \begin{equation*}
    \boldsymbol{\psi}_0 = \argmin_{\boldsymbol{\psi} \in \boldsymbol{\Psi}}\;
    \Theta(\boldsymbol{\psi}, \eta, P_0)
  \end{equation*}
  for $\Theta(\cdot)$ defined under the observation-level Frobenius loss.
\end{proposition}
The proof is provided in the Online Supplement.} Our proposed loss therefore
satisfies the condition of Equation~\eqref{risk-min-def}. The main results of
the paper, however, relate only to the constant scaling factor case. In a minor
abuse of notation, we set $\eta_0 = \boldsymbol{1}$, and suppress dependence of
the loss function on the scaling factor wherever possible throughout the
remainder of the text.

\subsection{Optimality of the Cross-validated Estimator Selector}
\label{theory}

Having defined a suitable loss function, we turn to a discussion of the
theoretical properties of the cross-validated estimator selection procedure.
Specifically, we present, in Theorem~\ref{thm1}, sufficient conditions under
which the method is asymptotically equivalent in terms of risk to the
commensurate CV oracle selector (as per Equation~\eqref{intermediate-ratio}).
This theorem extends the general framework of \citet{dudoit2005asymptotics} for
use in high-dimensional multivariate estimator selection. Adapting their
existing theory to this setting  requires a judicious choice of loss function,
new assumptions, and updated proofs reflecting the use of high-dimensional
asymptotics. Corollary~\ref{full-dataset-result} then builds on
Theorem~\ref{thm1} and details conditions under which the procedure produces
asymptotically optimal selections in the sense of Equation~\eqref{goal-ratio}.
All proofs are provided in Section~\ref{supp:proofs} of the Online
Supplement.

\begin{theorem}\label{thm1}
  
  Let $X_1, \ldots, X_n$ be a random sample of $n$ i.i.d.~random vectors of
  dimension $J$, such that $X_i \sim P_0 \in \mathcal{M}, i = 1, \ldots, n$.
  Assume, without loss of generality, that $\mathbb{E}[X_i] = 0$, and define
  $\boldsymbol{\psi}_0 \coloneqq \text{Var}[X_i]$. Denote the set of $K$
  candidate estimators by $\{\hat{\Psi}_k(\cdot): k = 1, \ldots, K\}$.  Next,
  define the observation-level Frobenius loss function as $L(X;
  \boldsymbol{\psi}) \coloneqq \lVert X^\top X - \boldsymbol{\psi} \rVert^2_{F,
  1}$. Finally, designate $p_n$ as the proportion of observations in any given
  cross-validated validation set.  Consider the following assumptions:
  
  \begin{assumption}\label{thm1:ass1}
    For any $P \in \mathcal{M}$ and $X \sim P$, $\max_{j = 1, \ldots, J}\:
      (\lvert X^{(j)} \rvert) < \sqrt{M_1} < \infty$ almost surely (a.s.).
  \end{assumption}
  \begin{assumption}\label{thm1:ass2}
    \rone{Define $\boldsymbol\Psi \coloneqq \{\boldsymbol\psi \in
    \mathbb{R}^{J\times J} |\; \boldsymbol\psi = \boldsymbol\psi^\top, \lvert
    \psi^{(jl)}\rvert < M_2 < \infty$, $\forall \; j,l = 1, \ldots, J \}$, and
  assume that $\hat{\Psi}_k(P_n), \boldsymbol\psi_0 \in \boldsymbol\Psi$.}
  \end{assumption}

  \noindent
  \textbf{Finite-Sample Result.}
  Let $\overline{M}(J) \coloneqq 4(M_1+M_2)^2J^2$ and $c(\delta,
  \overline{M}(J)) \coloneqq 2(1 + \delta)^2 \overline{M}(J)(1/3 + 1/\delta)$.
  Then, for any $\delta > 0$,
  \begin{equation}\label{thm1:finite-sample-result}
    0 \leq \mathbb{E}_{P_0}[\tilde{\theta}_{p_n,n}(\hat{k}_{p_n,n}) - \theta_0]
      \leq (1 + 2\delta) \mathbb{E}_{P_0}[\tilde{\theta}_{p_n,n}
      (\tilde{k}_{p_n,n})-\theta_0] + 2c(\delta, \overline{M}(J))
      \frac{1 + \log(K)}{np_n}.
  \end{equation}

  \noindent
  \textbf{High-Dimensional Asymptotic Result.}
  The finite-sample result in Equation~\eqref{thm1:finite-sample-result} has the
  following asymptotic implications: If
  $c(\delta, \overline{M}(J))(1 + \log(K))/
  (np_n \mathbb{E}_{P_0}[\tilde{\theta}_{p_n,n}(\tilde{k}_{p_n,n})-\theta_0])
  \rightarrow 0$ and $J/n \rightarrow m < \infty$ as $n, J \rightarrow \infty$,
  then
  \begin{equation}\label{thm1:convergence}
    \frac{\mathbb{E}_{P_0}[\tilde{\theta}_{p_n,n}(\hat{k}_{p_n,n}) - \theta_0]}
      {\mathbb{E}_{P_0}[\tilde{\theta}_{p_n,n}(\tilde{k}_{p_n,n})
      - \theta_0]} \rightarrow 1.
  \end{equation}
  Further, if $c(\delta, \overline{M}(J))(1 + \log(K))/
  (np_n(\tilde{\theta}_{p_n,n}(\tilde{k}_{p_n,n}) -\theta_0))
  \overset{P}{\rightarrow} 0$ as $n, J \rightarrow \infty$, then
  \begin{equation}\label{thm1:convergence-in-prob}
    \frac{\tilde{\theta}_{p_n,n}(\hat{k}_{p_n,n}) - \theta_0}
      {\tilde{\theta}_{p_n,n}(\tilde{k}_{p_n,n}) - \theta_0}
      \overset{P}{\rightarrow} 1.
  \end{equation}
\end{theorem}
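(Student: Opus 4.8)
The plan is to adapt the cross-validation oracle-inequality argument of \citet{laan-dudoit:2003, vaart:2006}, keeping careful track of how each constant depends on the dimension $J$. The first step is to extract two structural consequences of Assumptions~\ref{thm1:ass1} and~\ref{thm1:ass2}. Writing $\tilde L(X;\boldsymbol\psi) \coloneqq L(X;\boldsymbol\psi) - L(X;\boldsymbol\psi_0)$, the entrywise bounds $\lvert X^{(j)}\rvert < \sqrt{M_1}$ and $\lvert\psi^{(jl)}\rvert < M_2$ give $\lvert X^{(j)}X^{(l)} - \psi^{(jl)}\rvert < M_1 + M_2$, hence $L(X;\boldsymbol\psi) < (M_1+M_2)^2 J^2$ and therefore $\lvert\tilde L(X;\boldsymbol\psi)\rvert \le \overline M(J) = 4(M_1+M_2)^2 J^2$ (with room to spare). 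Second, expanding the squared Frobenius norm yields the identity $\tilde L(X;\boldsymbol\psi) = \lVert\boldsymbol\psi - \boldsymbol\psi_0\rVert_{F,1}^2 - 2\langle XX^\top - \boldsymbol\psi_0,\, \boldsymbol\psi - \boldsymbol\psi_0\rangle_F$, so $\mathbb{E}_{P_0}[\tilde L(X;\boldsymbol\psi)] = \lVert\boldsymbol\psi - \boldsymbol\psi_0\rVert_{F,1}^2$, and by Cauchy--Schwarz together with the entrywise bound, $\mathrm{Var}_{P_0}(\tilde L(X;\boldsymbol\psi)) = 4\,\mathrm{Var}_{P_0}(\langle XX^\top - \boldsymbol\psi_0, \boldsymbol\psi - \boldsymbol\psi_0\rangle_F) \le \overline M(J)\,\mathbb{E}_{P_0}[\tilde L(X;\boldsymbol\psi)]$. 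This variance--mean (``quadratic loss'') inequality, uniform over $\boldsymbol\psi \in \mathbf\Psi$, is precisely what makes a Bernstein-type bound sharp enough to give a multiplicative, rather than merely additive, oracle inequality.

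Second, I would prove the finite-sample bound \eqref{thm1:finite-sample-result}. Condition on a split $B_n$ and the training set $P_{n,B_n}^0$; then the $np_n$ validation observations are i.i.d.\ $P_0$, and for each candidate $k$ the empirical validation average of $\tilde L(\cdot;\hat\Psi_k(P_{n,B_n}^0))$ has mean $\Theta(\hat\Psi_k(P_{n,B_n}^0),\eta_0,P_0)-\theta_0 \ge 0$. Apply Bernstein's inequality to this average using $\overline M(J)$ as the range bound and the variance--mean inequality to bound the variance, then take a union bound over $k = 1,\dots,K$ to control the supremum gap between the empirical cross-validated risk differences and their true counterparts. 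The standard chain then goes: decompose $\tilde\theta_{p_n,n}(\hat k_{p_n,n},\eta_0) - \theta_0$ as (empirical cross-validated risk difference at $\hat k_{p_n,n}$) plus (true minus empirical gap); bound the first summand by the empirical risk difference at $\tilde k_{p_n,n}$ using that $\hat k_{p_n,n}$ is the empirical minimizer; bound the two gap terms by the uniform deviation; and split the resulting $\sqrt{\mathrm{variance}}$ contributions via $2ab \le \delta a^2 + \delta^{-1}b^2$. This absorbs a $\delta$-fraction of both $\tilde\theta_{p_n,n}(\tilde k_{p_n,n},\eta_0)-\theta_0$ and $\tilde\theta_{p_n,n}(\hat k_{p_n,n},\eta_0)-\theta_0$ and leaves a remainder proportional to $\overline M(J)(1+\log K)/(np_n)$; taking $\mathbb{E}_{B_n}$ by Jensen and then $\mathbb{E}_{P_0}$, and rearranging, produces exactly the constant $c(\delta,\overline M(J)) = 2(1+\delta)^2\overline M(J)(1/3 + 1/\delta)$, with the $1/3$ coming from the Bernstein range term and the $1/\delta$ from the variance split. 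The lower bound $0 \le \mathbb{E}_{P_0}[\tilde\theta_{p_n,n}(\hat k_{p_n,n})-\theta_0]$ is immediate since $\boldsymbol\psi_0$ minimizes the risk.

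Third, the asymptotic statements are bookkeeping on top of \eqref{thm1:finite-sample-result}. Dividing by $\mathbb{E}_{P_0}[\tilde\theta_{p_n,n}(\tilde k_{p_n,n})-\theta_0]$, the hypothesis $c(\delta,\overline M(J))(1+\log K)/(np_n\,\mathbb{E}_{P_0}[\tilde\theta_{p_n,n}(\tilde k_{p_n,n})-\theta_0]) \to 0$ kills the remainder term, so the $\limsup$ of the ratio is at most $1 + 2\delta$; since $\delta>0$ is arbitrary and the ratio is at least $1$ (the oracle is the minimizer over $k$ of $\tilde\theta_{p_n,n}(\cdot,\eta_0)$), \eqref{thm1:convergence} follows, the Kolmogorov regime $J/n \to m$ entering only through the quadratic growth of $\overline M(J) = 4(M_1+M_2)^2J^2$. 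For the in-probability version \eqref{thm1:convergence-in-prob}, I would instead retain the Bernstein/union-bound step in its pathwise form to obtain, on an event of probability tending to one, a realization-wise oracle inequality with the same remainder, then divide by $\tilde\theta_{p_n,n}(\tilde k_{p_n,n})-\theta_0$ and invoke the stated in-probability rate condition.

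The step I expect to be the main obstacle is the variance--mean inequality and threading the $J$-dependent constant $\overline M(J)$ through Bernstein's inequality so that the final remainder is \emph{exactly} $2c(\delta,\overline M(J))(1+\log K)/(np_n)$ rather than merely of that order: one must place the convexity/Jensen step handling $\mathbb{E}_{B_n}$ before exponentiating in the Bernstein argument, and verify that conditioning on the (random) validation index set does not spoil the conditional i.i.d.\ structure needed for the concentration bound. Matching the precise numerical constants is delicate but otherwise routine.
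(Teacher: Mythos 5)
Your proposal follows essentially the same route as the paper's proof: the variance--mean (``quadratic loss'') inequality with $\overline M(J)=4(M_1+M_2)^2J^2$ (the paper's Lemma~\ref{lma1:bounded-loss-diff}, proved there via Jensen's inequality rather than your Cauchy--Schwarz argument, but with the identical constant), Bernstein's inequality with a union over the $K$ candidates, the oracle decomposition exploiting that $\hat k_{p_n,n}$ minimizes the cross-validated risk, and tail integration producing exactly the remainder $2c(\delta,\overline M(J))(1+\log K)/(np_n)$, followed by letting $\delta\downarrow 0$ for the asymptotics. The only substantive deviation is the final in-probability statement, which the paper obtains from the expectation bound via a Markov-inequality lemma (its Lemma~\ref{lma2:convergence-in-prob}) while you propose a pathwise high-probability oracle inequality; both routes work, the paper's being marginally simpler since it avoids choosing a slowly diverging confidence level.
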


The proof relies on special properties of the random variable
$Z_k~\coloneqq~L(X;\hat{\Psi}_k(P_n))~-~L(X; \boldsymbol{\psi}_0)$ and on an
application of Bernstein's inequality~\citep{bennett1962probability}. Together,
they are used to show that $2c(\delta, \overline{M}(J))(1 + \log(K)) / (n p_n)$
is a finite-sample bound for comparing the performance of the cross-validated
selector, $\hat{k}_{p_n,n}$, against that of the oracle selector over the
training sets, $\tilde{k}_{p_n, n}$. Once this bound is established, the
high-dimensional asymptotic results follow immediately.

Only a few sufficient conditions are required to provide finite-sample bounds
on the expected risk difference of the estimator selected via our CV procedure.
First, that each element of the random vector $X$ be bounded, and, second, that
the entries of all covariance matrices in the parameter space and the set of
possible estimates be bounded. Together, these assumptions allow for the
definition of $\overline{M}(J)$, the object permitting the extension of the
loss-based estimation framework to the high-dimensional covariance matrix
estimation problem.

\rone{The first assumption is technical in nature --- it makes the proofs
tractable. While it may appear stringent, and, for instance, is not satisfied
by Gaussian distributions, we believe it to be generally applicable.} We stress
that parametric data-generating distributions, like those exhibiting
Gaussianity, rarely reflect reality, that is, they are merely mathematical
conveniences\footnote{Anecdotally, one cannot help but find themself reminded
that ``Everyone is sure of this [that errors are normally
distributed]~$\ldots$~since the experimentalists believe that it is a
mathematical theorem, and the mathematicians that it is an experimentally
determined fact.''~\citep[][p.~171]{poincare1912calcul}}. Most random variables,
or transformations thereof, arising in scientific practice are bounded by
limitations of the physical, electronic, or biological measurement process;
thus, our method remains widely applicable. For example, in microarray and
next-generation sequencing experiments, the raw data are images on a 16-bit
scale, constraining them to $[0, 2^{16})$. Similarly, the measurement of
immunologic markers, of substantial interest in vaccine efficacy trials of
HIV-1, COVID-19, and other infectious diseases, are bounded by the limits of
detection and/or quantitation imposed by assay biotechnology.

Verifying that the additional assumptions required by Theorem~\ref{thm1}'s
asymptotic results hold proves to be more challenging. We write $f(y) =
O(g(y))$ if $\lvert f \rvert$ is bounded above by $g$, $f(y) = o(g(y))$ if $f$
is dominated by $g$, $f(y) = \Omega(g(y))$ if $f$ is bounded below by $g$, and
$f(y) = \omega(g(y))$ if $f$ dominates $g$, all in asymptotics with respect to
$n$ and $J$. Further, a subscript ``P'' might be added to these bounds, denoting
a convergence in probability. Now, note that $c(\delta, \overline{M}(J))(1
+ \log(K))/(np_n) = O(J)$ for fixed $p_n$ and as $J/n \rightarrow m > 0$. Then
the conditions associated with Equation~\eqref{thm1:convergence} and
Equation~\eqref{thm1:convergence-in-prob} hold so long as
$\mathbb{E}_{P_0}[\tilde{\theta}_{p_n,n}(\tilde{k}_{p_n,n})-\theta_0]
= \omega(J)$ and $\tilde{\theta}_{p_n,n}(\tilde{k}_{p_n,n})-\theta_0
= \omega_P(J)$, respectively.

These requirements do not seem particularly restrictive given that the
complexity of the problem generally increases as a function of the number of
features. There are many more entries in the covariance matrix requiring
estimation than there are observations. This intuition is corroborated by our
extensive simulation study in the following section. Consistent estimation in
terms of the Frobenius risk is therefore not possible in high-dimensions
without additional assumptions about the data-generating process.

Some additional insight might be gained by identifying conditions under which
these assumptions are \textit{unmet} for popular structural beliefs about the
true covariance matrix. In particular, we consider the sparse covariance
matrices defined in~\citet{bickelb:2008} and accompanying hard-thresholding
estimators (see Section~\ref{supp:thresh-est} of the Online Supplement):

\begin{proposition}\label{prop:sparsity}
  In addition to Assumptions~\ref{thm1:ass1} and~\ref{thm1:ass2} of
  Theorem~\ref{thm1}, assume that $\psi_0$ is a member of the following set of
  matrices:
  \begin{equation*}
    \left\{\psi: \psi^{(jj)} < M_2, \sum_{l=1}^J I(\psi^{(jl)}
    \neq 0) < s(J) \text{ for all }, j = 1, \ldots, J \right\}
  \end{equation*}
  where $s(J)$ is the row sparsity, that the hard-thresholding estimator is in
  the library of candidates, \rone{and that it uses a ``sufficiently large''
  thresholding hyperparameter value in the sense of~\citet{bickelb:2008}.}
  Then, by Theorem 2 of~\citet{bickelb:2008}, we have\newline
  $\mathbb{E}_{P_0}[\tilde{\theta}_{p_n,n}(\tilde{k}_{p_n,n})-\theta_0] =
  o(J)$ if $s(J)/J = o(1/\log J)$ asymptotically in $n$ and $J$.
\end{proposition}

Proposition~\ref{prop:sparsity} states that the conditions for achieving the
asymptotic results of Theorem~\ref{thm1} are not met if the proportion of
non-zero elements in the covariance matrix's row with the most non-zero
elements converges to zero faster than $1/\log J$ and the library of candidates
possesses a hard-thresholding estimator whose thresholding hyperparameter is
reasonable in the sense of~\citet{bickelb:2008}'s Theorem 2 and its subsequent
discussion. \rone{Plainly, the true covariance matrix cannot be too sparse if
the collection of considered estimators contains the hard-thresholding
estimator with a correctly specified thresholding hyperparameter value.}

This implies that banded covariance matrices whose number of bands are fixed
for $J$ do not meet the criteria for our theory to apply, assuming that one of
the candidate estimators correctly specifies the number of bands. Nevertheless,
we observe empirically in Section~\ref{simulations} that our cross-validated
procedure selects an optimal estimator when the true covariance matrix is
banded or tapered more quickly in terms of $n$ and $J$ than any other type of
true covariance matrix.

These results are likely explained by the relatively low complexity of the
estimation problem in this setting. High-dimensional asymptotic arguments are
perhaps unnecessary when the proportion of entries needing to be estimated in
the true covariance matrix quickly converges to zero. These limitations of our
theory reflect stringent, and typically unverifiable, structural assumptions
about the estimand. We reiterate that the conditions of Theorem~\ref{thm1} are
generally satisfied. \rone{In situations where the true covariance matrix is
known to possess this level sparsity, practitioners might instead appeal to
Equation~(39) of~\citet{bickelb:2008} to support their use of a
cross-validated estimator selection procedure. This result, coupled with that
of Proposition~\ref{prop:frob}, likely explains the aforementioned simulation
findings of the banded and tapered covariance matrices.}

Now, Theorem~\ref{thm1}'s high-dimensional asymptotic results relate the
performance of the cross-validated selector to that of the oracle selector for
the CV scheme. As indicated by the expression in Equation~\eqref{goal-ratio},
however, we would like our cross-validated procedure to be asymptotically
equivalent to the oracle over the \textit{entire} data set. The conditions to
obtain this desired result are provided in Corollary~\ref{full-dataset-result},
a minor adaptation of previous work by \citet{dudoit2005asymptotics}. This extension
accounts for increasing $J$, thereby permitting its use in high-dimensional
asymptotics.

\begin{corollary}\label{full-dataset-result}

  Building upon Assumptions~\ref{thm1:ass1} and \ref{thm1:ass2} of
  Theorem~\ref{thm1}, we introduce the additional assumptions that, as $n, J
  \rightarrow \infty$ and  $J/n \rightarrow m < \infty$, $p_n \rightarrow 0$,
  $c(\delta, \overline{M}(J))(1 + \log(K))/
  (np_n(\tilde{\theta}_{p_n,n}(\tilde{k}_{p_n,n}) -\theta_0))
  \overset{P}{\rightarrow} 0$, and

  \begin{equation}\label{cor1-condition}
    \frac{\tilde{\theta}_{p_n,n}(\tilde{k}_{p_n,n})-\theta_0}
      {\tilde{\theta}_{n}(\tilde{k}_n)-\theta_0}
      \overset{P}{\rightarrow} 1.
  \end{equation}
  Under these assumptions, it follows that
  \begin{equation}\label{cor1-asymptotic}
    \frac{\tilde{\theta}_{p_n,n}(\hat{k}_{p_n,n}) - \theta_0}
      {\tilde{\theta}_n(\tilde{k}_n) - \theta_0}
      \overset{P}{\longrightarrow} 1.
  \end{equation}
\end{corollary}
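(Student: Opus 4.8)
The plan is to obtain \eqref{cor1-asymptotic} by writing the target ratio as a product of two ratios, each of which is already known to converge to $1$ in probability: the first by Theorem~\ref{thm1} and the second by the newly imposed hypothesis \eqref{cor1-condition}. Because $\theta_0$ is the minimal risk under $P_0$, the differences $\tilde{\theta}_{p_n,n}(\tilde{k}_{p_n,n}) - \theta_0$ and $\tilde{\theta}_n(\tilde{k}_n) - \theta_0$ are nonnegative; they are moreover strictly positive on an event whose probability tends to one, since both the rate condition (whose denominator contains $\tilde{\theta}_{p_n,n}(\tilde{k}_{p_n,n}) - \theta_0$) and \eqref{cor1-condition} (whose denominator is $\tilde{\theta}_n(\tilde{k}_n) - \theta_0$) are only meaningful when these quantities do not vanish. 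On that event one has
\begin{equation*}
  \frac{\tilde{\theta}_{p_n,n}(\hat{k}_{p_n,n}) - \theta_0}
       {\tilde{\theta}_n(\tilde{k}_n) - \theta_0}
  = \frac{\tilde{\theta}_{p_n,n}(\hat{k}_{p_n,n}) - \theta_0}
         {\tilde{\theta}_{p_n,n}(\tilde{k}_{p_n,n}) - \theta_0}
    \cdot
    \frac{\tilde{\theta}_{p_n,n}(\tilde{k}_{p_n,n}) - \theta_0}
         {\tilde{\theta}_n(\tilde{k}_n) - \theta_0}.
\end{equation*}

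Next I would confirm that the assumptions of the corollary furnish exactly the hypotheses needed for the in-probability conclusion \eqref{thm1:convergence-in-prob} of Theorem~\ref{thm1}. Assumptions~\ref{thm1:ass1} and \ref{thm1:ass2} are inherited, $J/n \rightarrow m < \infty$ is assumed, and the condition $c(\delta, \overline{M}(J))(1 + \log(K))/(np_n(\tilde{\theta}_{p_n,n}(\tilde{k}_{p_n,n}) - \theta_0)) \overset{P}{\rightarrow} 0$ is precisely the hypothesis of that portion of the theorem. Hence the first factor on the right-hand side converges to $1$ in probability. The second factor converges to $1$ in probability by assumption~\eqref{cor1-condition}.

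It then remains to combine the two limits. I would invoke the continuous mapping theorem --- multiplication being continuous at the point $(1,1)$ --- equivalently the algebra of convergence in probability: a product of two random sequences, each converging in probability to $1$, converges in probability to $1$. Since the factorization above holds on an event of probability tending to one, this yields \eqref{cor1-asymptotic}.

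There is no serious analytic obstacle; as the authors remark, the statement is a minor adaptation of \citet{laan-dudoit:2003}, and the work is essentially bookkeeping --- verifying that the corollary's conditions entail those of Theorem~\ref{thm1}'s probabilistic conclusion, and that the denominators are eventually positive so the ratios are well defined. The hypothesis $p_n \rightarrow 0$ is not consumed by the chaining argument itself; it is the customary condition ensuring that a training set of size $n(1 - p_n)$ is asymptotically as informative as the full sample of size $n$, which is what renders \eqref{cor1-condition} --- the equivalence of the cross-validated oracle's and the full-sample oracle's risk differences --- a mild requirement in the regime $J/n \rightarrow m$.
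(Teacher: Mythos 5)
Your proof is correct and follows essentially the same route as the paper: factor the target ratio into the Theorem~\ref{thm1} ratio (which converges to one in probability under the corollary's rate condition) times the ratio in Equation~\eqref{cor1-condition}, and conclude by the algebra of convergence in probability. The paper's proof additionally spends a few lines showing that the sufficient condition in Equation~\eqref{suff-cond} together with $p_n \rightarrow 0$ implies Equation~\eqref{cor1-condition}, but since that condition is assumed directly in the corollary, your argument covers the stated result.
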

The proof is a direct application of the asymptotic results of
Theorem~\ref{thm1}.

As before, the assumption that $c(\delta, \overline{M}(J))(1 + \log(K))/
(np_n(\tilde{\theta}_{p_n,n}(\tilde{k}_{p_n,n}) - \theta_0))
\overset{P}{\rightarrow} 0$ remains difficult to verify, but essentially
requires the estimation error of the oracle to increase quickly as the number
of features grows. That is, $np_n(\tilde{\theta}_{p_n,n}(\tilde{k}_{p_n,n}) -
\theta_0) = \omega_P(J)$. We posit that this condition is generally satisfied,
similarly to the asymptotic results of Theorem~\ref{thm1}.

Now, a sufficient condition for Equation~\eqref{cor1-condition} is that there
exists a $\gamma > 0$ such that
\begin{equation}\label{suff-cond}
  \left(n^\gamma(\tilde{\theta}_{n}(\tilde{k}_n)-\theta_0), \;
    (n(1-p_n))^\gamma(\tilde{\theta}_{p_n,n}(
  \tilde{k}_{p_n,n})-\theta_0)\right) \overset{d}{\rightarrow} (Z, Z),
\end{equation}
for a single random variable $Z$ with $\mathbb{P}(Z > a) = 1$ for some $a > 0$.
For single-split validation, where $\mathbb{P}(B_n = b) = 1$ for some $b \in
\{0, 1\}^n$, it suffices to assume that there exists a $\gamma > 0$ such that
$n^\gamma(\tilde{\theta}_{n}(\tilde{k}_n)-\theta_0) \overset{d}{\rightarrow} Z$
for a random variable $Z$ with $\mathbb{P}(Z > a) = 1$ for some $a > 0$.

Equation~\eqref{cor1-condition} essentially requires that the (appropriately
scaled) distributions of the cross-validated and full-dataset conditional risk
differences of their respective oracle selections converge in distribution as
$p_n \rightarrow 0$. Again, this condition is unrestrictive. As $p_n
\rightarrow 0$, the composition of each training set becomes increasingly
similar to that of the full-dataset. The resulting estimates produced by each
candidate estimator over the training sets and the full-dataset will
correspondingly converge. Naturally, so too will the cross-validated and
full-dataset conditional risk difference distributions of their respective
selections.

While the number of candidates in the estimator library $K$ has been
assumed to be fixed in this discussion of the proposed method's asymptotic
results, it may be allowed to grow as a function of $n$ and $J$. Of course,
this will negatively impact the convergence rates of $c(\delta,
\overline{M}(J))(1 + \log(K))/(np_n
\mathbb{E}_{P_0}[\tilde{\theta}_{p_n,n}(\tilde{k}_{p_n,n})-\theta_0])$ and
$c(\delta, \overline{M}(J))(1 + \log(K))/
(np_n(\tilde{\theta}_{p_n,n}(\tilde{k}_{p_n,n}) -\theta_0))$. The sufficient
conditions outlined in the asymptotic results of Theorem~\ref{thm1} are
achieved so long as the library of candidates does not grow too aggressively.
That is, we can make the additional assumptions that $K = o(\text{exp}\{
\mathbb{E}_{P_0}[\tilde{\theta}_{p_n,n}(\tilde{k}_{p_n,n})-\theta_0]/J\})$ and
$K = o_P(\text{exp}\{(\tilde{\theta}_{p_n,n}(\tilde{k}_{p_n,n})-\theta_0)/J\})$
such that the results of Equations~\eqref{thm1:convergence} and
\eqref{thm1:convergence-in-prob} are achieved, respectively.

\rone{
Finally, we have assumed thus far that $\mathbb{E}_{P_0}[X]$ is known. This is
generally not the case in practice. In place of a random vector centered at
zero, we might instead consider the set of $n$ demeaned random vectors
$\tilde{\mathbf{X}}_{n\times J}$ where $\tilde{X}_i = X_i - \bar{X}$ and
$\bar{X}^{(j)} = 1/n \sum X_i^{(j)}$. It follows from the details given in
Remark~\ref{supp:remark:unknown-mean} of the Online Supplement that the
asymptotic results of Theorem~\ref{thm1} and
Corollary~\ref{full-dataset-result} apply to $\tilde{\mathbf{X}}_{n\times J}$.
}

\section{Simulation Study}
\label{simulations}

\subsection{Simulation Study Design}

We conducted a series of simulation experiments using prominent covariance
models to verify the theoretical results of our cross-validated estimator
selection procedure. These models are described below.

\begin{description}
  \item[Model 1:] A dense covariance matrix, where
    \begin{equation*}
      \psi^{(jl)} = \begin{cases}
        1, & j = l \\
        0.5, & \text{otherwise} \\
      \end{cases}.
    \end{equation*}
  
  \item[Model 2:] An AR(1) model, where $\psi^{(jl)} = 0.7^{\lvert j-l
    \rvert}$. This covariance matrix, corresponding to a common timeseries
    model, is approximately sparse for large $J$, since the off-diagonal
    elements quickly shrink to zero. 

  \item[Model 3:] An MA(1) model, where $\psi^{(jl)} = 0.7^{\lvert j-l \rvert}
    \cdot \mathbb{I}(\lvert j-l \rvert \leq 1)$. This
    covariance matrix, corresponding to another common timeseries model, is
    truly sparse. Only the diagonal, subdiagonal, and superdiagonal contain
    non-zero elements. 

  \item[Model 4:] An MA(2) model, where
    \begin{equation*}
      \psi^{(jl)} = \begin{cases}
        1, & j = l \\
        0.6, & \lvert j - l \rvert = 1 \\
        0.3, & \lvert j - l \rvert = 2 \\
        0, & \text{otherwise} \end{cases}.
    \end{equation*}
    This timeseries model is similar to Model 3, but slightly less sparse.
  
  \item[Model 5:] A random covariance matrix model. First, a $J \times J$
    random matrix whose elements are i.i.d.~$\text{Uniform}(0, 1)$ is
    generated. Next, entries below $1/4$ are set to $1$, entries between $1/4$
    and $1/2$ are set to $-1$, and the remaining entries are set to $0$. The
    square of this matrix is then computed and added to the identity matrix
    $\mathbf{I}_{J \times J}$. Finally, the corresponding correlation matrix is
    computed and used as the model's covariance matrix.
  
  \item[Model 6:] A Toeplitz covariance matrix, where
    \begin{equation*}
      \psi^{(jl)} = \begin{cases}
        1, & j = l \\
        0.6 \lvert j - l \rvert^{-1.3}, & \text{otherwise}
      \end{cases}.
    \end{equation*}
    Like the AR(1) model, this covariance matrix is approximately sparse for
    large $J$. However, the off-diagonal entries decay less quickly as their
    distance from the diagonal increases.

  \item[Model 7:] A Toeplitz covariance matrix with alternating signs, where
    \begin{equation*}
      \psi^{(jl)} = \begin{cases}
        1, & j = l \\
        (-1)^{\lvert j-l \rvert}0.6\lvert j-l \rvert^{-1.3}, & \text{otherwise}
      \end{cases}.
    \end{equation*}
    This model is almost identical to Model 6, though the signs of the
    covariance matrix's entries are alternating.

  \item[Model 8:] A covariance matrix inspired by the latent variable model
    described in the Online Supplement's
    Equation~\eqref{supp:factor-model}. Let $\boldsymbol{\beta}_{J \times
    3}~=~(\beta_1, \ldots, \beta_J)^\top$, where $\beta_j$ are randomly
    generated using a $N(0, \mathbf{I}_{3 \times 3})$ distribution for $j = 1,
    \ldots, J$. Then $\boldsymbol\psi = \boldsymbol{\beta}
    \boldsymbol{\beta}^\top + \mathbf{I}_{J \times J}$ is the covariance matrix
    of a model with three latent factors.

\end{description}

Each covariance model was used to generate data sets consisting of $n \in \{50,
100, 200, 500\}$ i.i.d.~multivariate Gaussian, mean-zero observations. The
uniform boundedness condition of \rone{Theorem~\ref{thm1}'s
Assumption~\ref{thm1:ass1}} is therefore not satisfied; we do this purposefully
to further stress that this assumption is not limiting in many practical
settings. For each model and sample size, five data dimension ratios were
considered: $J/n \in \{0.3, 0.5, 1, 2, 5\}$.  Together, the eight covariance
models, four sample sizes, and five dimensionality ratios result in $160$
distinct simulation settings. For each such setting, the performance of the
cross-validated selector with respect to the various oracle selectors and
several well-established estimators is evaluated based on aggregation across
$200$ Monte Carlo repetitions.

We applied our estimator selection procedure, which we refer to as
\textit{cvCovEst}, using a 5-fold CV scheme. The library of candidate
estimators is provided in the Online Supplement's
Table~\ref{supp:table:sim-hyperparams}, which includes details on these
estimators' possible hyperparameters. Seventy-four estimators make up the
library of candidates. We note that no penalty is attributed to estimators
generating rank-deficient estimates, like the sample covariance matrix when $J
> n$, though this limitation is generally of practical importance. When the
situation dictates that the resulting estimate must be positive-definite, the
library of candidates should be assembled accordingly.

\subsection{Simulation Study Results}

To examine empirically the optimality results of Theorem \ref{thm1}, we
computed analytically, for each replication, the cross-validated conditional
risk differences of the cross-validated selection, $\hat{k}_{p_n,n}$, and the
cross-validated oracle selection, $\tilde{k}_{p_n, n}$. The Monte Carlo
expectations of the risk differences stratified by $n$, $J/n$, and the
covariance model were computed from the cross-validated conditional risk
differences. The ratios of the expected risk differences are presented in
Figure~\ref{sim:conv-in-exp}A. These results make clear that, for virtually all
models considered, the estimator chosen by our proposed cross-validated
selection procedure has a risk difference asymptotically identical on average
to that of the cross-validated oracle.

\begin{figure}
  \centering
  \includegraphics[width=\textwidth]{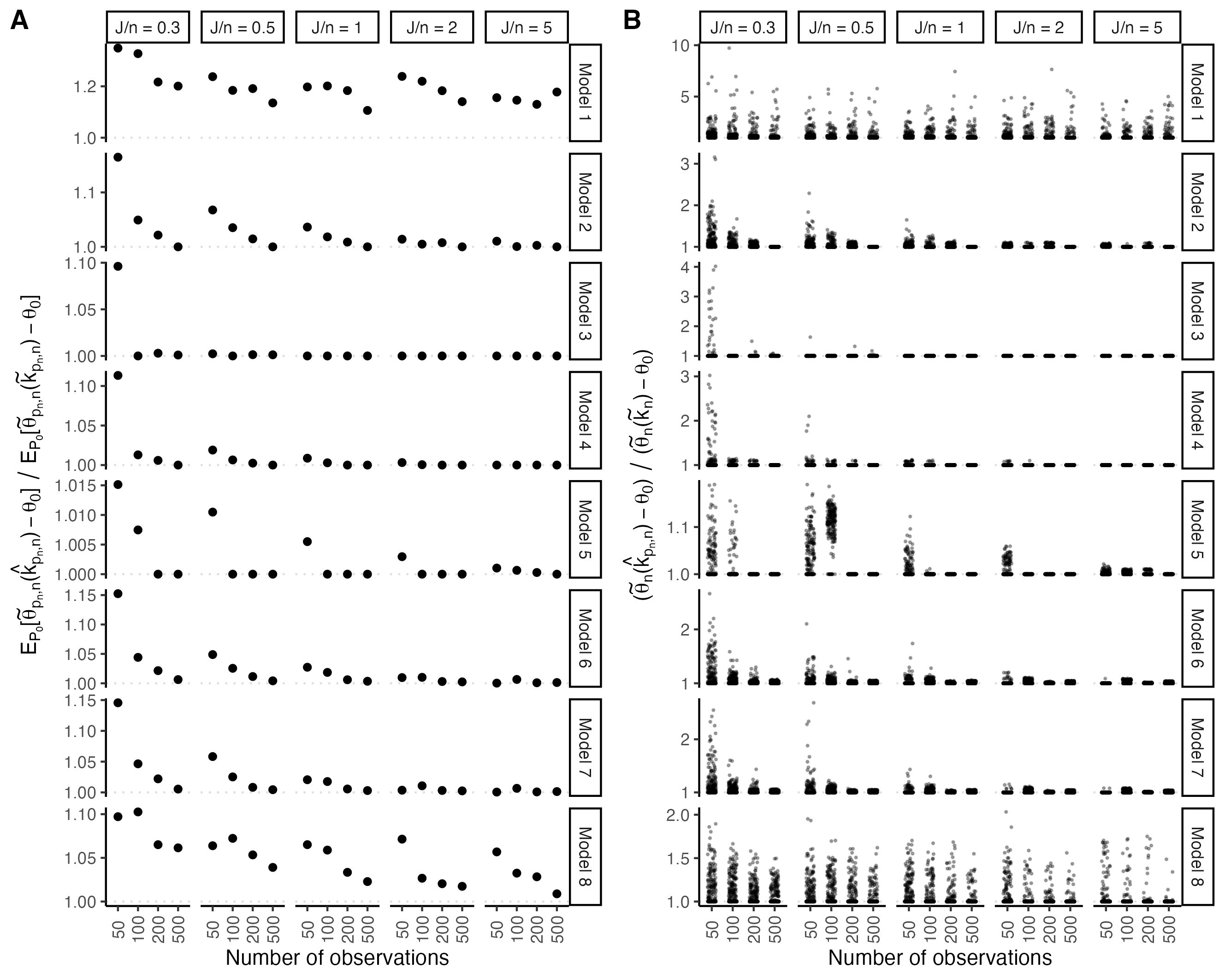}
  \caption{(\textbf{A}) Comparison of the cross-validated selection and
    cross-validated oracle selection's mean cross-validated conditional risk
    differences. (\textbf{B}) Comparison of the cross-validated selection and
    oracle selection's full-dataset conditional risk differences. Note the
    differing y-axis scales for the different models.}
  \label{sim:conv-in-exp}
\end{figure}

A stronger result, corresponding to Equation~\eqref{thm1:convergence-in-prob}
of Theorem~\ref{thm1}, is presented in the Online Supplement's
Figure~\ref{supp:sim:conv-in-prob}. For all but Models 1 and 8, we find that
our algorithm's selection is virtually equivalent to the cross-validated oracle
selection for $n \geq 200$ and $J/n \geq 0.5$. Even for Model 8, in which the
covariance matrices are more difficult to estimate due to their dense
structures, we find that our selector identifies the optimal estimator with
probability tending to $1$ for $n \geq 200$ and $J/n = 5$.

More impressive still are the results presented in
Figure~\ref{sim:conv-in-exp}B that characterize the full-dataset conditional
risk difference ratios. For all covariance matrix models considered, with the
exception of Model 1, our procedure's selections attain near asymptotic
optimality for moderate values of $n$ and $J/n$. This suggests that our
loss-based estimator selection approach's theoretical guarantee, as outlined in
Corollary~\ref{full-dataset-result}, is achievable in many practical settings.

In addition to verifying our method's asymptotic behavior, we compared its
estimates against those of competing approaches. We computed the Frobenius and
spectral norms of each procedure's estimate against the corresponding true
covariance matrix. The mean norms over all simulations were then computed for
each covariance matrix estimation procedure, again stratified by $n$, $J/n$,
and the covariance matrix model (Figures~\ref{supp:mean-frobenius} and
\ref{supp:mean-spectral} of the Online Supplement). Our data-adaptive
procedure is found to perform at least as well as the best alternative
estimation strategy across all simulation scenarios under both norms.

\section{Real Data Examples}
\label{examples}

Single-cell transcriptome sequencing (scRNA-seq) allows researchers to study
the gene expression profiles of individual cells. The fine-grained
transcriptomic data that it provides have been used to identify rare cell
populations and to elucidate the developmental relationships between diverse
cellular states.

\begin{figure}
  \centering
  \includegraphics[width=0.7\textwidth]{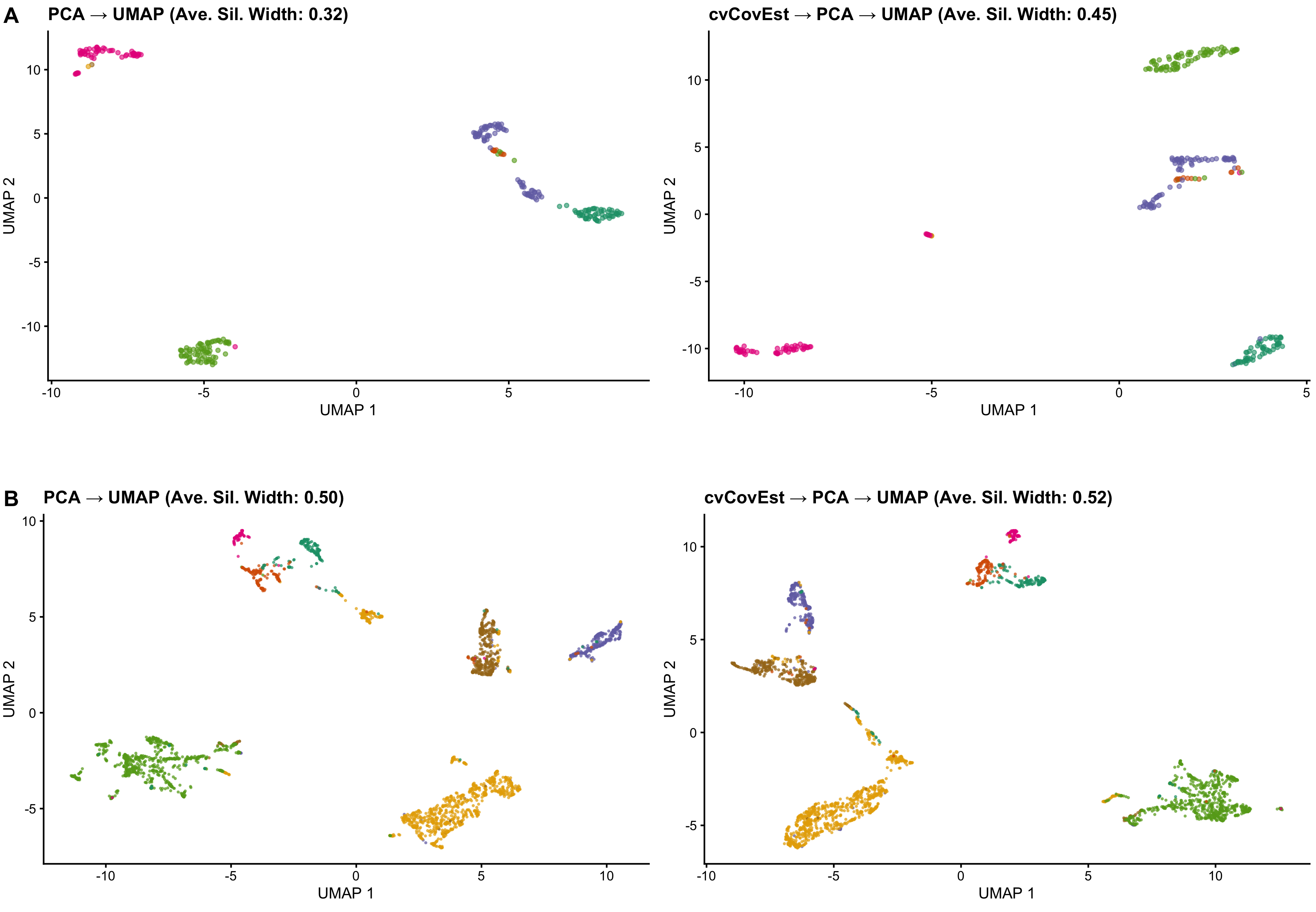}
  \caption{
    Comparisons of scRNA-seq data sets' UMAP embeddings based on vanilla PCA or
    PCA with the cross-validated selection's covariance matrix estimate. The
    data sets consist of \textbf{(A)} 285 cells collected from the visual
    cortex of mice and \textbf{(B)} 2,816 mouse brain cells. Distinct cell types
    are indicated by color.
  }
  \label{scRNA-seq}
\end{figure}

Given that a typical scRNA-seq data set possesses tens of thousands of features
(genes), most workflows prescribe a dimensionality reduction step. In addition
to reducing the amount of computational resources needed to analyze the data,
reducing the dimensions mitigates the effect of corrupting noise on interesting
biological signal. The lower-dimensional embedding is then used in downstream
analyses, like novel cell-type discovery via clustering.

One of the most popular methods used to reduce the dimensionality of scRNA-seq
data is uniform manifold approximation and projection
(UMAP)~\citep{McInnes:2018}. This method captures the most salient non-linear
relationships among a high-dimensional data set's features and projects them
onto a reduced-dimensional space. Instead of applying UMAP directly, the
scRNA-seq data set's leading principal components (PCs) are often used as an
initialization.

This initial dimensionality reduction by PCA is believed to play a helpful role
in denoising. However, PCA typically relies on the sample covariance matrix,
and so when the data set is high-dimensional, the resulting principal
components are known to be poor estimates of those of the
population~\citep{Johnstone:2009}. We hence posit that our cross-validated
estimator selection procedure could form a basis for an improved PCA. That is,
we hope that the eigenvectors resulting from the eigendecomposition of our
estimated covariance matrix could be used to generate a set of estimates closer
to the true PCs in terms of risk. These PCs could then be fed to UMAP to
produce an enhanced embedding. Indeed, additional simulation results provided
in Section~\ref{supp:spectral-norm-sim} of the Online Supplement suggest
that cvCovEst produces estimates of the leading eigenvalue at least as well as
those produced by the sample covariance matrix, in terms of the spectral norm.

We applied our procedure to two scRNA-seq data sets for which the cell types
are known \textit{a priori}. These data were obtained from the
\texttt{scRNAseq} Bioconductor \texttt{R} package~\citep{scRNAseq-pkg}, and
prepared for analysis using a workflow outlined in \citet{osca:2020}. A 5-fold
CV scheme was used; the library of candidate estimators is provided in the
Online Supplement's Table~\ref{supp:table:scrnaseq-hyperparams}. We expect that
cells of the same class will form tight, distinct clusters within the
low-dimensional representations. The resulting embeddings, which we refer to as
the \textit{cvCovEst-based} embeddings, were then compared to those produced by
UMAP using traditional PCA for initialization, which we refer to as the
\textit{PCA-based} embeddings. For each embedding, the 20 leading PCs were fed
to UMAP. The first data set is a collection of 285 mouse visual cortex cells
\citep{tasic:2016}, and the second data set consists of 2,816 mouse brain cells
\citep{zeisel:2015}. The 1,000 most variable genes of each data set were used
to compute the PCs of both embeddings. 

\begin{figure}
  \centering
  \includegraphics[width=0.8\textwidth]{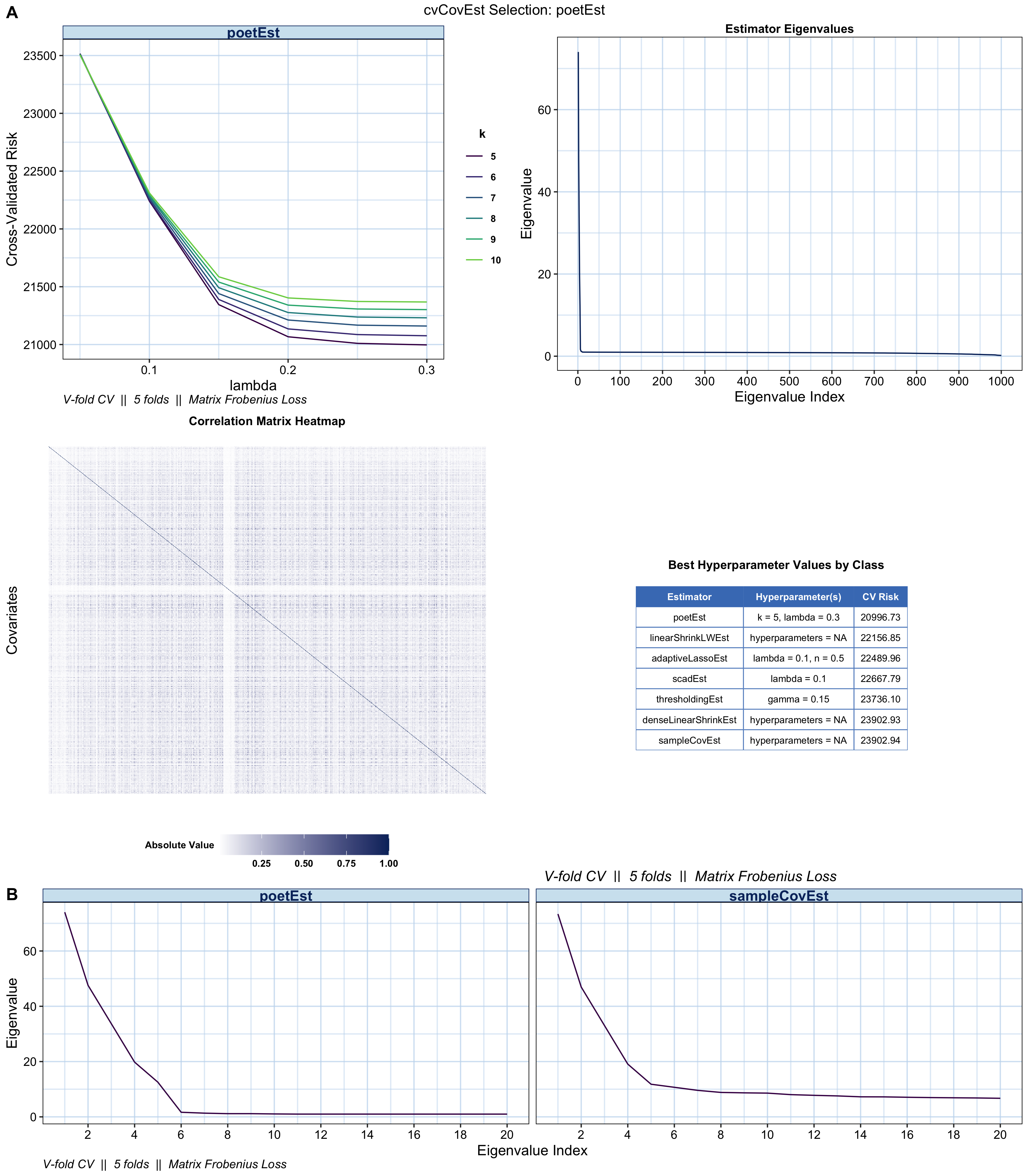}
  \caption{Tasic dataset: Diagnostic plots and tables generated 
  using the cvCovEst R package.
  \textbf{(A)} The top-left plot presents the cross-validated Frobenius risk of
  the estimator selected by our method. $k$ represents the number of
  potential latent factors, and \texttt{lambda} the thresholding value used.
  The top-right panel contains a line plot of the selected estimator's
  eigenvalues. The bottom-left plot displays the absolute values of the
  estimated correlation matrix output by the cvCovEst selection, and the
  bottom-right table lists the best performing estimators from all classes of
  estimators considered. \textbf{(B)} Side-by-side line plots of the estimated
  leading eigenvalues of the cvCovEst selection and the sample covariance
  matrix.}
  \label{fig:allen-diagnostics}
\end{figure}

The resulting UMAP plots are presented in Figure~\ref{scRNA-seq}. Though the
two embeddings generated for each data set are qualitatively similar, the
low-dimensional representation relying on our loss-based approach is more
refined in Figure~\ref{scRNA-seq}A. A number of cells erroneously clustered in
the PCA-based embedding are correctly represented in the cvCovEst-based
embedding. This explains the 41\% increase in average silhouette width of our
method relative to the traditional approach. \rone{Further insight is gleaned
from the diagnostic plots of Figure~\ref{fig:allen-diagnostics}.
Figure~\ref{fig:allen-diagnostics}A indicates that cvCovEst selected the POET
estimator \citep{fan:2013} with 5 latent factors and a thresholding
hyperparameter of 0.3. It that the selected estimator significantly improves
upon the sample covariance matrix in terms of the cross-validated Frobenius
risk. Figure~\ref{fig:allen-diagnostics}B provides further insight into the
discrepancies between the UMAP results of Figure~\ref{scRNA-seq}A: the sample
covariance matrix likely over-estimates many of the leading eigenvalues.}

The embeddings in Figure~\ref{scRNA-seq}B qualitatively identical, and so too
are their average silhouette widths. This is expected, the \citet{zeisel:2015}
is not truly high-dimensional. The sample covariance matrix likely is a
reasonable estimator in this setting. Ideally, data-adaptive selection
procedures should be cognizant of this. Indeed, cvCovEst, when applied to the
\citet{zeisel:2015} data set, selects an estimator whose cross-validated
empirical risk is only slightly smaller than that of the sample covariance
matrix, and whose leading PCs are virtually identical
(Figure~\ref{supp:fig:zeisel-diagnostics} of the Online Supplement).

\section{Discussion}
\label{discussion}

This work extends \citet{dudoit2005asymptotics}'s framework for asymptotically
optimal, data-adaptive estimator selection to the problem of covariance matrix
estimation in high-dimensional settings. We provide sufficient conditions under
which our cross-validated procedure is asymptotically optimal in terms of risk,
and show that it generalizes the cross-validated hyperparameter selection
procedures employed by existing estimation approaches. \rtwo{Future work might
derive analogous results for other loss functions, or perhaps even for other
parameters like the precision matrix.}

The simulation study provides evidence that near-optimal results are achieved
in data sets with relatively modest numbers of observations and many features
across models indexed by diverse covariance matrix structures. These results
also establish that our cross-validated procedure performs as well as the best
bespoke estimation procedure in a variety of settings. Our scRNA-seq data
examples further illustrate the utility of our approach in fields where
high-dimensional data are collected routinely.

Practitioners need no longer rely upon intuition alone when deciding which
candidate estimator is best from among a library of diverse estimators. We
expect that a variety of computational procedures relying upon the accurate
estimation of the covariance matrix beyond the exploratory analyses considered
here, like clustering and latent variable estimation, stand to benefit from the
application of this framework.

\bigskip
\begin{center}
{\large\bf SUPPLEMENTARY MATERIAL}
\end{center}

\begin{description}

\item[Online Supplement:] All technical proofs and additional simulation results
  are included in the online supplement. A brief review of popular covariance
  matrix estimators is also included.

\item[cvCovEst R package:] The cross-validated covariance matrix
  estimator selection procedure is implemented in \texttt{cvCovEst}
  \citep{cvCovEst}, an open-source \texttt{R} package \citep{Rlang}. This
  package is available via the Comprehensive \texttt{R} Archive Network (CRAN)
  at \url{https://CRAN.R-project.org/package=cvCovEst}. Internally, the
  \texttt{cvCovEst} package relies upon the generalized CV framework of the
  \texttt{origami} \texttt{R} package \citep{Coyle:2018}.

\item[Simulation and Analysis Code:] Results in Sections~\ref{simulations}
  and~\ref{examples} were produced using version 0.1.3 of the \texttt{cvCovEst}
  \texttt{R} package. The code and data used to produce these analyses are
  publicly available on GitHub at
  \url{https://github.com/PhilBoileau/pub_cvCovEst}. 

\end{description}

\subsection*{Acknowledgements}
We thank Brian Collica and Jamarcus Liu for significant contributions to the
development of the \texttt{cvCovEst} \texttt{R} package. PB is funded by the
Fonds de recherche du Qu\'{e}bec - Nature et technologies, and the Natural
Sciences and Engineering Research Council of Canada. NSH gratefully acknowledges
the support of the National Science Foundation (award no.~DMS 2102840).

\bibliographystyle{jasa3}
{\footnotesize \bibliography{Bibliography-MM-MC}}

\end{document}


\def\spacingset#1{\renewcommand{\baselinestretch}%
{#1}\small\normalsize} \spacingset{1}


\if0\blind
{
  \title{\bf Supplementary Material for ``Cross-Validated Loss-Based Covariance
    Matrix Estimator Selection in High Dimensions''}
\author{Philippe Boileau\\
    {\normalsize Graduate Group in Biostatistics and Center for Computational
      Biology,} {\normalsize UC Berkeley} \\
    Nima S.\ Hejazi\\
    {\normalsize Division of Biostatistics, Department of Population Health
        Sciences,} {\normalsize Weill Cornell Medicine}\\
    Mark J.~van der Laan\\
    {\normalsize Division of Biostatistics, Department of Statistics,}\\
    {\normalsize and Center for Computational Biology, UC Berkeley}\\
    Sandrine Dudoit\thanks{To whom correspondence should be addressed:
    sandrine@stat.berkeley.edu}\hspace{0.2cm}\\
    {\normalsize Department of Statistics, Division of Biostatistics,}\\
    {\normalsize and Center for Computational Biology, UC Berkeley}
}
  \maketitle
} \fi

\if1\blind
{
  \bigskip
  \bigskip
  \bigskip
  \begin{center}
    {\LARGE\bf Cross-Validated Loss-Based Covariance Matrix Estimator Selection
      in High Dimensions}
\end{center}
  \medskip
} \fi

\makeatletter\@input{get-main-aux.tex}\makeatother

\newpage
\section{Proofs}
\label{proofs}

\begin{proof}
  \textbf{Proposition~\ref{paper:prop:frob}.}

 Assume without loss of generality that $\mathbb{E}_{P_0}[X] = 0$ and
 $\eta_0=1$. Then,

  \begin{equation*}
    \begin{split}
      \hat{\theta}_{p_n,n}(k, 1) &
        = \mathbb{E}_{B_n}\left[\frac{1}{np_n}
        \sum_{i=1}^n \mathbb{I}(B_n(i) = 1) \lVert
        X_i X^\top_i - \hat{\Psi}_k(P_{n,B_n}^0)
        \rVert^2_{F, 1}\right] \\
      & = \mathbb{E}_{B_n}\left[\frac{1}{np_n}
        \sum_{i=1}^n \mathbb{I}(B_n(i) = 1)
        \sum_{j=1}^J\sum_{l=1}^J (X_{i}^{(j)}X^{(l)}_i -
        \hat{\Psi}_k(P_{n, B_n}^0)^{(jl)})^2 \right]\\
      & = \mathbb{E}_{B_n}\left[\frac{1}{np_n}
        \sum_{j=1}^J\sum_{l=1}^J \left(\sum_{\{i: B_n(i)=1\}}\left(
        (X_i^{(j)}X^{(l)}_i)^2 - 2X_i^{(j)}X^
        {(l)}_i\hat{\Psi}_k(P_{n, B_n}^0)^{(jl)} \right) \right.\right.\\
      & \qquad\qquad\qquad\qquad\qquad\qquad
        + (\hat{\Psi}_k(P_{n, B_n}^0)^{(jl)})^2\Bigg)\Bigg]\\
      & = \mathbb{E}_{B_n} \Bigg[
        \sum_{j=1}^J\sum_{l=1}^J \bigg((\hat{\Psi}_k(P_{n, B_n}^0)^{(jl)})^2 -
        2S(P_{n,B_n}^{1})^{(jl)}\hat{\Psi}_{k}(P_{n, B_n}^0)^{(jl)} \\
        & \qquad \qquad \qquad \qquad \qquad
        + \frac{1}{np_n}\sum_{\{i: B_n(i)=1\}}(X_i^{(j)}X^{(l)}_i)^2
        \bigg)\Bigg] \\
      & = \mathbb{E}_{B_n} \left[
        \sum_{j=1}^J\sum_{l=1}^J \bigg(\big(\hat{\Psi}_k(P_{n, B_n}^0)^{(jl)}
        \big)^2 - 2S(P_{n,B_n}^{1})^{(jl)}\hat{\Psi}_{k}(P_{n, B_n}^0)^{(jl)}
        \bigg)\right] + C_1,
    \end{split}
  \end{equation*}
  where $C_1$ is constant with respect to $\hat{\Psi}_k(P_{n,B_n}^0)$.

  From Equation~\eqref{paper:matrix-frob-risk} of the main text, notice that
  \begin{equation*}
    \begin{split}
      \hat{R}_n(\hat{\Psi}_k, 1) & = \mathbb{E}_{B_n} \left[\lVert
        \hat{\Psi}_k(P_{n,B_n}^0) -
        \mathbf{S}_n(P_{n,B_n}^1)\rVert^2_{F, 1}\right] \\
      & = \mathbb{E}_{B_n}\left[\sum_{j=1}^J\sum_{l=1}^J\left(
        (\hat{\Psi}_k(P_{n, B_n}^0)^{(jl)})^2 -
        2S(P_{n,B_n}^{1})^{(jl)}\hat{\Psi}_{k}(P_{n, B_n}^0)^{(jl)} + 
        (S(P_{n,B_n}^{1})^{(jl)})^2\right)\right] \\
      & = \mathbb{E}_{B_n}\left[\sum_{j=1}^J\sum_{l=1}^J\left(
        (\hat{\Psi}_k(P_{n, B_n}^0)^{(jl)})^2 -
        2S(P_{n,B_n}^{1})^{(jl)}
        \hat{\Psi}_{k}(P_{n, B_n}^0)^{(jl)}\right)\right] + C_2,
    \end{split}
  \end{equation*}
  where $C_2$ is constant with respect to
  $\hat{\Psi}_k(P_{n,B_n}^0)$.
  
  \noindent Thus,
  \begin{equation*}
    \begin{split}
      \hat{k}_{p_n,n} & = \text{arg min}_{k \in \{1, \ldots, K\}}
        \hat{\theta}_{p_n,n}(k, 1) \\
      & = \text{arg min}_{k \in \{1, \ldots, K\}}
        \hat{R}_n(\hat{\Psi}_k, 1).
    \end{split}
  \end{equation*}
\end{proof}

\begin{proof}
  \textbf{Proposition~\ref{paper:prop:minimizer}.}

  For any $a = 1, \ldots, J$ and $b = 1,\ldots, J$, we find that
  \begin{equation*}
    \begin{split}
      0
      & = \frac{\delta}{\delta\psi^{(ab)}}\Theta(\boldsymbol{\psi},
        \boldsymbol{\eta}, P_0) \\
      & = \frac{\delta}{\delta\psi^{(ab)}}\mathbb{E}_{P_0}\left[
        L(X; \boldsymbol{\psi}, \boldsymbol{\eta})\right] \\
      & = \frac{\delta}{\delta\psi^{(ab)}}\mathbb{E}_{P_0}\left[
        \sum_{j=1}^J\sum_{l=1}^J \eta^{(jl)}
        \left(X^{(j)}X^{(l)} - \psi^{(jl)}\right)^2\right] \\
      & = \mathbb{E}_{P_0}\left[\frac{\delta}{\delta\psi^{(ab)}}
        \sum_{j=1}^J\sum_{l=1}^J \eta^{(jl)}
        \left(X^{(j)}X^{(l)} - \psi^{(jl)}\right)^2
      \right] \\
      & \propto \mathbb{E}_{P_0}\left[X^{(a)}X^{(b)} - \psi^{(ab)}\right] \\
      \Rightarrow \psi^{(ab)} & = \psi_0^{(ab)}.
    \end{split}
  \end{equation*}
  It follows that $\boldsymbol{\psi}_0$ is a risk minimizer.
\end{proof}

\begin{proof}
  \textbf{Proposition~\ref{paper:prop:sparsity}.}

  Let the hard-thresholding estimator minimizing the expected cross-validated
  conditional risk under $P_0$ be indexed by $k_0$. Then
  \begin{equation*}
    \begin{split}
      \mathbb{E}_{P_0}[\tilde{\theta}_{p_n,n}(\tilde{k}_{p_n,n})-\theta_0]
      & \leq \mathbb{E}_{P_0}[\tilde{\theta}_{p_n,n}(k_0)-\theta_0] \\
      & = \mathbb{E}_{P_0}\left[\mathbb{E}_{B_n}\left[\mathbb{E}_{P_0}\left[
        \lVert \hat{\Psi}_{k_0}(P_{n, B_n}^0) - XX^\top\rVert^2_{F,1}
        \right.\right.\right.\\
        & \qquad\qquad\qquad\qquad\qquad
        - \lVert \boldsymbol\psi_0 - XX^\top\rVert^2_{F,1}
        \mid P_{n, B_n}^0\Big]\Big]\Big] \\
      & = \mathbb{E}_{P_0}\left[\mathbb{E}_{B_n}\left[\mathbb{E}_{P_0}\left[
        \lVert \hat{\Psi}_{k_0}(P_{n, B_n}^0) - \boldsymbol\psi_0 \rVert^2_{F,1}
        \mid P_{n, B_n}^0\right]\right]\right] \\
      & = O(s(J) \log J).
    \end{split}
  \end{equation*}
  The first inequality follows from the definition of the cross-validated
  conditional risk under $P_0$; the last equality follows from Theorem~2
  (and its subsequent discussion) of \citet{bickelb:2008}, since $X$ is
  sub-Gaussian by the boundedness condition of Assumption~1 and
  $J/n \rightarrow m$ as $n, J \rightarrow \infty$. Then,
  \begin{equation*}
    \frac{c(\delta, \overline{M}(J))(1 + \log(K)}
    {np_n \mathbb{E}_{P_0}[\tilde{\theta}_{p_n,n}(\tilde{k}_{p_n,n})-\theta_0]}
    = \Omega\left(\frac{J}{s(J)\log J}\right).
  \end{equation*}
\end{proof}

\begin{lemma}\label{lma1:bounded-loss-diff}
  Let $Z_k \coloneqq L(X; \hat{\Psi}_k(P^0_{n, B_n})) -
  L(X; \boldsymbol{\psi}_0)$. Then,
  \begin{equation*}
    \text{Var}_{P_0}\left[Z_k \mid P^0_{n, B_n}, B_n\right] \leq
    \overline{M}(J) \mathbb{E}_{P_0}\left[Z_k \mid P^0_{n, B_n}, B_n\right],
  \end{equation*}
  where $\overline{M}(J) \coloneqq 4(M_1+M_2)^2J^2 $.
\end{lemma}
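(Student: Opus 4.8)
The plan is to bound the variance of $Z_k$ by its second moment, and then show that the second moment is itself controlled by a constant times the first moment, where the constant comes from the uniform boundedness of the loss increments. First I would observe that $\text{Var}_{P_0}[Z_k \mid P^0_{n,B_n}, B_n] \leq \mathbb{E}_{P_0}[Z_k^2 \mid P^0_{n,B_n}, B_n]$ trivially. The key is then to factor $Z_k^2 = Z_k \cdot Z_k$ and bound one of the factors in absolute value by a deterministic constant $\sqrt{\overline{M}(J)}$-ish quantity; more precisely, I would show $|Z_k| \leq \overline{M}(J)$ almost surely (conditionally), which gives $\mathbb{E}_{P_0}[Z_k^2 \mid \cdot] \leq \overline{M}(J)\, \mathbb{E}_{P_0}[|Z_k| \mid \cdot]$, and then use the hypothesis $\mathbb{E}_{P_0}[Z_k \mid \cdot] > 0$ together with a sign argument to replace $\mathbb{E}_{P_0}[|Z_k| \mid \cdot]$ by $\mathbb{E}_{P_0}[Z_k \mid \cdot]$ — but the latter replacement is not immediate since $Z_k$ need not be sign-definite.

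To handle that, I would instead bound $|Z_k|$ directly and argue as follows. Write $Z_k = L(X;\hat{\Psi}_k(P^0_{n,B_n})) - L(X;\boldsymbol\psi_0)$ and recall $L(X;\boldsymbol\psi) = \sum_{j,l}(X^{(j)}X^{(l)} - \psi^{(jl)})^2$. The difference of two such sums telescopes entrywise: for each pair $(j,l)$, the increment is $(a - b)^2 - (a - c)^2 = (c - b)(2a - b - c)$ with $a = X^{(j)}X^{(l)}$, $b = \hat\psi_k^{(jl)}$, $c = \psi_0^{(jl)}$. Under Assumption~\ref{thm1:ass1}, $|X^{(j)}X^{(l)}| < M_1$, and under Assumption~\ref{thm1:ass2}, $|\hat\psi_k^{(jl)}|, |\psi_0^{(jl)}| < M_2$ (the candidate estimates lie in $\boldsymbol\Psi$). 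Hence $|2a - b - c| < 2M_1 + 2M_2$ and $|c - b| < 2M_2$, so each entrywise increment is bounded by $4M_2(M_1+M_2)$; summing over the $J^2$ entries gives $|Z_k| < 4M_2(M_1+M_2)J^2$. Actually the cleaner bound, matching $\overline{M}(J) = 4(M_1+M_2)^2 J^2$, comes from writing $|a - b| < M_1 + M_2$ and $|a - c| < M_1 + M_2$ so that $|(a-b)^2 - (a-c)^2| = |a-b+a-c|\,|c - b| \leq (|a-b| + |a-c|)\cdot(|a-b| + |a-c|)$ is loose; I would instead just use $|(a-b)^2 - (a-c)^2| \leq \max\{(a-b)^2, (a-c)^2\} \leq (M_1+M_2)^2$ after noting the two terms have the same sign only when... — here it is safest to bound $|(a-b)^2 - (a-c)^2| \le 4(M_1+M_2)^2$ crudely via $(a-b)^2 \le 4(M_1+M_2)^2$ and likewise for $(a-c)^2$, then use that $Z_k$ is a sum of $J^2$ such signed terms but the cancellation is not guaranteed; the robust route is $|Z_k| \le \sum_{j,l} |(a-b)^2 - (a-c)^2| \le \sum_{j,l} |c-b|\,|2a-b-c| < J^2 \cdot 2M_2 \cdot 2(M_1+M_2) = 4M_2(M_1+M_2)J^2 \le 4(M_1+M_2)^2 J^2 = \overline{M}(J)$.

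With $|Z_k| \le \overline{M}(J)$ a.s.\ conditionally established, I would then need $\mathbb{E}_{P_0}[Z_k^2 \mid \cdot] \le \overline{M}(J)\,\mathbb{E}_{P_0}[Z_k \mid \cdot]$. Since $|Z_k| \le \overline{M}(J)$ gives $Z_k^2 \le \overline{M}(J)|Z_k|$, it remains to pass from $\mathbb{E}[|Z_k|\mid\cdot]$ to $\mathbb{E}[Z_k\mid\cdot]$, and this is precisely where the hypothesis $\mathbb{E}_{P_0}[Z_k \mid \cdot] > 0$ is used — but as noted, $\mathbb{E}[|Z_k|] \ge \mathbb{E}[Z_k]$ goes the wrong way. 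The resolution, and the step I expect to be the main obstacle, is that one does not in fact need the absolute value: the quantity actually being bounded downstream is $\text{Var}[Z_k\mid\cdot]$, and one can write $\text{Var}[Z_k \mid \cdot] \le \mathbb{E}[Z_k^2 \mid \cdot]$, then use that $\mathbb{E}[Z_k^2\mid\cdot] \le \overline{M}(J)\mathbb{E}[|Z_k|\mid\cdot]$ is too weak, so instead I would exploit the specific structure: $L(X;\hat\Psi_k) - L(X;\boldsymbol\psi_0)$, with $\boldsymbol\psi_0$ the risk minimizer, satisfies $\mathbb{E}_{P_0}[L(X;\boldsymbol\psi) - L(X;\boldsymbol\psi_0)] = \|\boldsymbol\psi - \boldsymbol\psi_0\|_{F,1}^2$ (a bias–variance identity for squared-error-type losses), so $\mathbb{E}_{P_0}[Z_k \mid \cdot] = \|\hat\Psi_k(P^0_{n,B_n}) - \boldsymbol\psi_0\|_{F,1}^2$, while a direct computation of $\mathbb{E}_{P_0}[Z_k^2 \mid \cdot]$ — expanding the telescoped entrywise form and using $|2X^{(j)}X^{(l)} - \hat\psi_k^{(jl)} - \psi_0^{(jl)}| < 2(M_1+M_2)$ — yields $\mathbb{E}_{P_0}[Z_k^2\mid\cdot] \le 4(M_1+M_2)^2 J^2 \cdot \|\hat\Psi_k - \boldsymbol\psi_0\|_{F,1}^2 = \overline{M}(J)\,\mathbb{E}_{P_0}[Z_k\mid\cdot]$, after bounding the sum-of-squares by $J^2$ times the $\ell_\infty$-bound times the Frobenius term. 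Chaining $\text{Var}[Z_k\mid\cdot] \le \mathbb{E}[Z_k^2\mid\cdot] \le \overline{M}(J)\mathbb{E}[Z_k\mid\cdot]$ then completes the proof; the positivity hypothesis $\mathbb{E}_{P_0}[Z_k\mid\cdot] > 0$ guarantees the right-hand side is a genuine (nonnegative) bound and that the inequality is non-vacuous.
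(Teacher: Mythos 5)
Your final route is correct and is essentially the paper's proof: compute $\mathbb{E}_{P_0}[Z_k \mid P^0_{n,B_n}, B_n]=\sum_{j,l}(\psi_0^{(jl)}-\hat{\Psi}_k^{(jl)}(P^0_{n,B_n}))^2$ via the entrywise factorization $(a-b)^2-(a-c)^2=(c-b)(2a-b-c)$ and $\mathbb{E}[X^{(j)}X^{(l)}]=\psi_0^{(jl)}$, then bound $\mathrm{Var}\le\mathbb{E}[Z_k^2]$ and pull out $4(M_1+M_2)^2J^2$ from the squared double sum. The only point to make explicit is the step turning the square of the sum into $J^2$ times the sum of squares (Cauchy--Schwarz, or the paper's Jensen argument applied to the $J^{-2}$-scaled average); with that named, your argument coincides with the paper's, and your earlier detour via $\lvert Z_k\rvert\le\overline{M}(J)$ and $\mathbb{E}\lvert Z_k\rvert$ is correctly recognized as unnecessary.
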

\begin{proof}
  \begin{equation*}
    \begin{split}
      \mathbb{E}_{P_0}\left[Z_k \mid P^0_{n, B_n}, B_n\right] & =
        \mathbb{E}_{P_0}\left[L(X; \hat{\Psi}_k(P^0_{n, B_n})) -
          L(X; \boldsymbol{\psi}_0) \Big\vert
        P^0_{n, B_n}, B_n\right] \\
      & =  \mathbb{E}_{P_0}\left[\sum_{j=1}^J\sum_{l=1}^J
        (X^{(j)}X^{(l)} -
        \hat{\Psi}^{(jl)}_k(P_{n, B_n}^0))^2 - (X^{(j)}X^{(l)} -
        \psi_0^{(jl)})^2 \bigg\vert P_{n, B_n}^0, B_n\right] \\
      & = \sum_{j=1}^J\sum_{l=1}^J
        \mathbb{E}_{P_0}\bigg[(\psi_0^{(jl)} -
        \hat{\Psi}^{(jl)}_k(P_{n, B_n}^0))(2X^{(j)}X^{(l)} -
        \psi_0^{(jl)}\\
        & \qquad\qquad\qquad\qquad\qquad - \hat{\Psi}^{(jl)}_k(P_{n,
        B_n}^0)) \bigg\vert P_{n, B_n}^0, B_n\bigg] \\
      & = \sum_{j=1}^J\sum_{l=1}^J (\psi_0^{(jl)} -
        \hat{\Psi}^{(jl)}_k(P_{n, B_n}^0))\\
        & \qquad\qquad\qquad\qquad\qquad \mathbb{E}_{P_0}\bigg[2X^{(j)}X^{(l)} -
        \psi_0^{(jl)} - \hat{\Psi}^{(jl)}_k(P_{n,
        B_n}^0) \bigg\vert P_{n, B_n}^0, B_n\bigg] \\
           & =  \sum_{j=1}^J\sum_{l=1}^J (\psi_0^{(jl)}-
           \hat{\Psi}^{(jl)}_k(P_{n, B_n}^0))^2\\
      & =  \sum_{j=1}^J\sum_{l=1}^J \mathbb{E}_{P_0}\left[
        (\psi_0^{(jl)}-\hat{\Psi}^{(jl)}_k(P_{n, B_n}^0))^2
        \Big\vert P_{n, B_n}^0, B_n \right], \\
    \end{split}
  \end{equation*}
  where the second to last equality follows by noting that
  $\mathbb{E}_{P_0}[X^{(j)}X^{(l)}|P_{n, B_n}^0, B_n] =
  \mathbb{E}_{P_0}[X^{(j)}X^{(l)}]$ and that, by definition,
  $\mathbb{E}_{P_0}[X^{(j)}X^{(l)}] = \psi_0^{(jl)}$. Then,

  \begin{equation*}
    \begin{split}
      \text{Var}_{P_0}\left[Z_k \mid P^0_{n, B_n}, B_n\right] & \leq
        \mathbb{E}_{P_0}\left[Z_k^2 \mid P^0_{n, B_n}, B_n\right] \\
      & = \mathbb{E}_{P_0}\Bigg[\bigg( \sum_{j=1}^J\sum_{l=1}^J
        (\psi_0^{(jl)} - \hat{\Psi}^{(jl)}_k(P_{n, B_n}^0))
        (2X^{(j)}X^{(l)} - \psi_0^{(jl)} \\
      & \qquad \qquad \qquad \qquad \qquad  \qquad
        - \hat{\Psi}^{(jl)}_k(P_{n, B_n}^0))\bigg)^2 \Bigg\vert
        P^0_{n, B_n}, B_n\Bigg] \\
    & = J^4 \mathbb{E}_{P_0}\Bigg[\bigg(
        \frac{1}{J^2} \sum_{j=1}^J\sum_{l=1}^J
        (\psi_0^{(jl)} - \hat{\Psi}^{(jl)}_k(P_{n, B_n}^0))
        (2X^{(j)}X^{(l)} - \psi_0^{(jl)} \\
      & \qquad \qquad \qquad \qquad \qquad  \qquad
        - \hat{\Psi}^{(jl)}_k(P_{n, B_n}^0))\bigg)^2 \Bigg\vert
        P^0_{n, B_n}, B_n\Bigg] \\
    & \leq J^4 \mathbb{E}_{P_0}\Bigg[
        \frac{1}{J^2} \sum_{j=1}^J\sum_{l=1}^J
        (\psi_0^{(jl)} - \hat{\Psi}^{(jl)}_k(P_{n, B_n}^0))^2
        (2X^{(j)}X^{(l)} - \psi_0^{(jl)} \\
      & \qquad \qquad \qquad \qquad \qquad  \qquad
        - \hat{\Psi}^{(jl)}_k(P_{n, B_n}^0))^2
        \Bigg\vert P^0_{n, B_n}, B_n\Bigg] \\
    & \leq J^2 4(M_1 + M_2)^2 \mathbb{E}_{P_0}\left[
        \sum_{j=1}^J\sum_{l=1}^J (\psi_0^{(jl)} -
        \hat{\Psi}^{(jl)}_k(P_{n, B_n}^0))^2
        \Bigg\vert P^0_{n, B_n}, B_n\right] \\
    & = 4(M_1+M_2)^2J^2 \mathbb{E}_{P_0}\left[
        Z_k \mid P^0_{n, B_n}, B_n\right] \\
    & = \overline{M}(J) \mathbb{E}_{P_0}\left[
        Z_k \mid P^0_{n, B_n}, B_n\right]. \\
    \end{split}
  \end{equation*}
  Here, the second inequality holds from the application of Jensen's Inequality
  to the square of the double sum, which is effectively an expectation of a
  discrete uniform random variable when scaled by $J^2$. The final inequality
  results from Assumptions~1 and 2, and concludes the proof.
\end{proof}

The following lemma is a result taken directly from
\citet{dudoit2005asymptotics}. It is restated here for convenience.
\begin{lemma}\label{lma2:convergence-in-prob}

  Let $Y_1, Y_2, \ldots$ be a sequence of random variables. If
  $\mathbb{E}[\lvert Y_n \rvert] = O(g(n))$ for some positive function
  $g(\cdot)$, then $Y_n = O_P(g(n))$.

\end{lemma}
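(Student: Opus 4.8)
The plan is to reduce the statement to a single application of Markov's inequality. First I would unpack what must be shown: $Y_n = O_P(g(n))$ means that for every $\epsilon > 0$ there exist a finite constant $M_\epsilon > 0$ and an index $N_\epsilon$ such that $\mathbb{P}(\lvert Y_n \rvert > M_\epsilon\, g(n)) < \epsilon$ for all $n$ (with the finitely many small indices absorbed separately). By hypothesis $\mathbb{E}[\lvert Y_n \rvert] = O(g(n))$, so there exist $C > 0$ and $N_0$ with $\mathbb{E}[\lvert Y_n \rvert] \leq C\, g(n)$ for all $n \geq N_0$; since $g$ is positive, this rearranges to $\mathbb{E}[\lvert Y_n \rvert / g(n)] \leq C$ for $n \geq N_0$.

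Next I would apply Markov's inequality to the nonnegative random variable $\lvert Y_n \rvert / g(n)$: for any $M > 0$ and any $n \geq N_0$,
\begin{equation*}
  \mathbb{P}\!\left(\frac{\lvert Y_n \rvert}{g(n)} > M\right)
    \leq \frac{\mathbb{E}\!\left[\lvert Y_n \rvert / g(n)\right]}{M}
    \leq \frac{C}{M}.
\end{equation*}
Given $\epsilon > 0$, taking $M_\epsilon \coloneqq C/\epsilon$ bounds the right-hand side by $\epsilon$ uniformly over $n \geq N_0$. To dispose of the initial segment $n < N_0$, I would use that each $Y_n$ is an a.s.\ finite random variable, so $\mathbb{P}(\lvert Y_n \rvert / g(n) > M) \to 0$ as $M \to \infty$; choosing for each such $n$ a constant $M_n$ making this probability below $\epsilon$ and replacing $M_\epsilon$ by $\max\{M_\epsilon, M_1, \ldots, M_{N_0 - 1}\}$ yields one constant valid for all $n$, which gives $Y_n = O_P(g(n))$.

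I do not anticipate a genuine obstacle: the argument is a textbook consequence of Markov's inequality. The only points needing a little care are the positivity of $g$ (so that dividing by $g(n)$ and rearranging the $O$-bound are legitimate) and the separate treatment of the finitely many indices before the $O$-bound takes effect, both of which are routine. If the ambient convention instead defines $O_P$ via $\limsup_n \mathbb{P}(\lvert Y_n \rvert / g(n) > M) \to 0$ as $M \to \infty$, the same display delivers it immediately, since the bound $C/M$ does not depend on $n$.
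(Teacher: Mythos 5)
Your proposal is correct and follows essentially the same route as the paper's proof: apply Markov's inequality to $\lvert Y_n \rvert / g(n)$ with the constant $C$ from the $O$-bound and choose $M_\epsilon = C/\epsilon$ (the paper sets $C/B = \epsilon$ and simply requires the bound for $n > N$, so it skips your extra handling of the initial segment). No gaps.
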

\begin{proof}

  We must show that, for each $\epsilon > 0$, there exists an $N$ and $B > 0$
  such that $\mathbb{P}(\lvert Y_n \rvert / g(n) > B) < \epsilon$ for all $n >
  N$. By assumption, there exists an $N$ and a $C > 0$ such that
  $\mathbb{E}[\lvert Y_n \rvert] / g(n) < C$ for all $n > N$. By defining $C /
  B = \epsilon$ and making use of Markov's Inequality, we find that
  \begin{equation*}
    \mathbb{P}\left(\frac{\lvert Y_n \rvert}{g(n)} > B \right) \leq 
    \frac{\mathbb{E}\left[\lvert Y_n \rvert\right]}{g(n)B} \leq \frac{C}{B} =
    \epsilon.
  \end{equation*}
\end{proof}

Having derived Lemma~\ref{lma1:bounded-loss-diff}, the remainder of the proof
for Theorem~\ref{paper:thm1} closely follows the proof of the first theorem in
\citet{dudoit2005asymptotics}.

\begin{proof} 
  \textbf{Theorem~\ref{paper:thm1}, Finite-Sample Result.}

\begin{equation}\label{split-risk-diff}
    \begin{split}
      0 & \leq \tilde{\theta}_{p_n,n}(\hat{k}_{p_n,n}) - \theta_0 \\
      & = \mathbb{E}_{B_n}\bigg[
        \int (L(x; \hat{\Psi}_{\hat{k}_{p_n,n}}(P_{n, B_n}^0)) -
        L(x; \boldsymbol{\psi}_0)) dP_0(x) \\
      & \qquad\qquad - (1 + \delta) \int L((x;
        \hat{\Psi}_{\hat{k}_{p_n,n}}(P_{n, B_n}^0)) -
        L(x; \boldsymbol{\psi}_0)) dP^1_{n, B_n}(x) \\
      & \qquad\qquad + (1 + \delta) \int
        (L(x; \hat{\Psi}_{\hat{k}_{p_n,n}}(P_{n, B_n}^0)) -
        L(x; \boldsymbol{\psi}_0)) dP^1_{n, B_n}(x) \bigg] \\
      & \leq \mathbb{E}_{B_n}\bigg[
        \int (L(x; \hat{\Psi}_{\hat{k}_{p_n,n}}(P_{n, B_n}^0)) -
        L(x; \boldsymbol{\psi}_0)) dP_0(x) \\
      & \qquad\qquad - (1 + \delta) \int (L(x;
        \hat{\Psi}_{\hat{k}_{p_n,n}}(P_{n, B_n}^0)) -
        L(x; \boldsymbol{\psi}_0)) dP^1_{n, B_n}(x) \\
      & \qquad\qquad + (1 + \delta) \int
        (L(x; \hat{\Psi}_{\tilde{k}_{p_n,n}}(P_{n, B_n}^0)) -
        L(x; \boldsymbol{\psi}_0)) dP^1_{n, B_n}(x) \bigg] \\
      & = \mathbb{E}_{B_n}\bigg[
        \int (L(x; \hat{\Psi}_{\hat{k}_{p_n,n}}(P_{n, B_n}^0)) -
        L(x; \psi_0)) dP_0(x) \\
      & \qquad\qquad - (1 + \delta) \int (L(x;
        \hat{\Psi}_{\hat{k}_{p_n,n}}(P_{n, B_n}^0)) -
        L(x; \boldsymbol{\psi}_0)) dP^1_{n, B_n}(x) \\
      & \qquad\qquad + (1 + \delta) \int
        (L(x; \hat{\Psi}_{\tilde{k}_{p_n, n}}(P_{n, B_n}^0)) -
        L(x; \boldsymbol{\psi}_0)) dP^1_{n, B_n}(x) \\
      & \qquad\qquad - (1 + 2\delta) \int
        (L(x; \hat{\Psi}_{\tilde{k}_{p_n,n}}(P_{n, B_n}^0)) -
        L(x; \boldsymbol{\psi}_0)) dP_0(x) \\
      & \qquad\qquad + (1 + 2\delta) \int
        (L(x; \hat{\Psi}_{\tilde{k}_{p_n,n}}(P_{n, B_n}^0)) -
        L(x; \boldsymbol{\psi}_0)) dP_0(x) \bigg]. \\
    \end{split}
  \end{equation}

  The first inequality is by assumption, and the second is by definition of
  $\hat{k}_{p_n,n}$ such that $\hat{\theta}_{p_n,n}(\hat{k}_{p_n,n}) \leq
  \hat{\theta}_{p_n,n}(k)$ $\forall k$. For simplicity in the remainder of the proof, we replace
  $\hat{k}_{p_n,n}$ and $\tilde{k}_{p_n,n}$ with $\hat{k}$ and $\tilde{k}$,
  respectively, in a slight abuse of notation.
  
  Now, let the first two terms of the last expression in
  Equation~\eqref{split-risk-diff} be denoted by $R_{\hat{k},n}$, and the
  third and fourth terms by $T_{\tilde{k},n}$. The last term is the
  cross-validated oracle's risk difference: $(1 +
  2\delta)(\tilde{\theta}_{p_n,n}(\tilde{k}) - \theta_0)$.
  Thus,
  
  \begin{equation}\label{target-finite-res}
    0 \leq \tilde{\theta}_{p_n,n}(\hat{k}) - \theta_0 \leq
    (1 + 2\delta)(\tilde{\theta}_{p_n,n}(\tilde{k})
    - \theta_0) + R_{\hat{k},n} + T_{\tilde{k},n}.
  \end{equation}

  We next show that $\mathbb{E}_{P_0}[R_{\hat{k},n}+
  T_{\tilde{k},n}]~\leq~2c(\delta,\overline{M}(J))(1+\text{log}(K))/(np_n)$,
  where  $c(\delta,\overline{M}(J))=2(1+\delta)^2\overline{M}(J)(1/\delta+1/3)$
  for some $\delta > 0$. For convenience, let 
  
  \begin{equation*}
    \begin{split}
      \hat{H}_k & \coloneqq \int (L(x;
        \hat{\Psi}_{k}(P_{n, B_n}^0)) -
        L(x; \boldsymbol{\psi}_0)) dP^1_{n, B_n}(x) \\
      \tilde{H}_k & \coloneqq \int
        (L(x; \hat{\Psi}_{k}(P_{n, B_n}^0)) -
        L(x; \boldsymbol{\psi}_0)) dP_0(x) \\
      R_{k,n}(B_n) & \coloneqq (1 + \delta)(\tilde{H}_k -
        \hat{H}_k) - \delta\tilde{H}_k \\
        T_{k,n}(B_n) & \coloneqq (1 + \delta)(\hat{H}_k -
        \tilde{H}_k) - \delta\tilde{H}_k,
    \end{split}
  \end{equation*}
  so that $R_{k,n} = \mathbb{E}_{B_n}[R_{k,n}(B_n)]$ and $T_{k,n} = 
  \mathbb{E}_{B_n}[T_{k,n}(B_n)]$.

  Given $B_n$ and $P_{n, B_n}^0$, let $Z_{k, i}, \; 1 \leq i \leq np_n$, denote
  the $np_n$ i.i.d. copies of $Z_k$ corresponding with the validation set,
  that is, with $\{X_i : B_n(i) = 1\}$ (as defined in
  Lemma~\ref{lma1:bounded-loss-diff}). Then, $\hat{H}_k = \sum_i Z_{k,i}/np_n$
  and $\tilde{H}_k = \mathbb{E}_{P_0}[Z_{k, i} \mid P^{0}_{n, B_n}, B_n]$.
  Hence, $\tilde{H}_k - \hat{H}_k$ is an empirical mean of $np_n$
  i.i.d.~centered random variables. Further, by Assumptions~1 and
  2, $\lvert Z_{k,i} \rvert < 2(M_1 + M_2)^2J^2$ a.s..
  Next, we apply Bernstein's Inequality to the centered empirical mean
  $\tilde{H}_k - \hat{H}_k$, using the property of the $Z_{k,i}$'s derived in
  Lemma~\ref{lma1:bounded-loss-diff}, to obtain a bound for the tail
  probabilities of $R_{k, n}(B_n)$ and $T_{k, n}(B_n)$: 

  \begin{equation*}
    \sigma^2_k \coloneqq \text{Var}_{P_0}\left[Z_k \mid P^0_{n, B_n}, B_n\right]
      \leq \overline{M}(J) \mathbb{E}\left[Z_k \mid P^0_{n, B_n}, B_n\right]
      = \overline{M}(J) \tilde{H}_{k}.
  \end{equation*}
  Then, for $s > 0$, Bernstein's Inequality yields
  \begin{equation*}
    \begin{split}
      \mathbb{P}_{P_0}\left(R_{k,n}(B_n) > s \mid P_{n, B_n}^0, B_n\right)
        & = \mathbb{P}_{P_0}\left(\tilde{H}_k - \hat{H}_k >
          \frac{s + \delta\tilde{H}_k}{1+ \delta} \Bigg\vert
          P_{n, B_n}^0, B_n\right) \\
        & \leq \mathbb{P}_{P_0}\left(\tilde{H}_k - \hat{H}_k >
          \frac{s + \delta\sigma_k^2/\overline{M}(J)}{1+ \delta}
          \Bigg\vert P_{n, B_n}^0, B_n\right) \\
        & \leq \text{exp}\left\{
          -\frac{np_n}{2(1+\delta)^2}\frac{\left(s + \delta
          \sigma_k^2/\overline{M}(J)\right)^2}
          {\sigma_k^2 + \frac{\overline{M}(J)}{3(1 + \delta)}\left(s + \delta 
          \sigma_k^2/\overline{M}(J)\right)}\right\}.
    \end{split}
  \end{equation*}
  Note that
  \begin{equation*}
    \frac{\left(s + \delta \sigma_k^2/\overline{M}(J)\right)^2}
    {\sigma_k^2 + \frac{\overline{M}(J)}{3(1 + \delta)}\left(s + \delta
      \sigma_k^2/ \overline{M}(J)\right)}
    = \frac{s + \delta \sigma_k^2/\overline{M}(J)}{
      \frac{\sigma_k^2}{s + \delta \sigma_k^2/\overline{M}(J)} + 
      \frac{\overline{M}(J)}{3(1 + \delta)}
    }
    \geq \frac{s + \delta \sigma_k^2/\overline{M}(J)}{
      \overline{M}(J)\left(\frac{1}{\delta} + \frac{1}{3}\right)
    }
    \geq  \frac{s}{\overline{M}(J)\left(\frac{1}{\delta} + \frac{1}{3}\right)}.
  \end{equation*}
  And so, for $s > 0$,
  \begin{equation*}
    \mathbb{P}_{P_0}\left(R_{k,n}(B_n) > s \mid P_{n, B_n}^0, B_n\right) \leq 
      \text{exp}\left\{ -\frac{np_n}{c\left(\delta, \overline{M}(J)\right)}s
      \right\} \leq K \; \text{exp}\left\{-\frac{np_n}{c\left(\delta,
      \overline{M}(J)\right)}s\right\},
  \end{equation*}
  where $c(\delta, \overline{M}(J)) = 2(1 + \delta)^2
  \overline{M}(J)(\frac{1}{\delta} + \frac{1}{3})$. The same bound applies for
  the marginal probabilities of $\mathbb{P}_{P_0}(R_{k,n}(B_n) > s)$ since they
  hold for arbitrary choices of $B_n$ and $P_{n,B_n}^{0}$. The second
  inequality follows from $K$ being larger than or equal to $1$ by
  definition.

  Finally, for any $u > 0$, we have by the properties of expectations and the
  previously derived result that
  \begin{equation*}
    \begin{split}
      \mathbb{E}_{P_0}[R_{\hat{k},n}]
        & = \int_0^\infty \mathbb{P}_{P_0}(R_{\hat{k},n} > s)ds
        - \int_{-\infty}^0 \mathbb{P}_{P_0}(R_{\hat{k},n} \leq
        s)ds \\
      & \leq \int_0^\infty \mathbb{P}_{P_0}
        (R_{\hat{k},n} > s)ds \\
      & \leq u + \int_u^\infty \mathbb{P}_{P_0} (R_{\hat{k},n} > s)ds \\
      & \leq u + \int_u^{\infty}
        K \text{exp}\left\{ - \frac{np_n}{c\left(\delta,
          \overline{M}(J)\right)} s\right\}ds.
    \end{split}
  \end{equation*}
  
  Since the expression on the right-hand side of the inequality above
  achieves its minimum value of $c(\delta, \overline{M}(J))(1
  + \text{log}(K)/(np_n)$ at $u_n = c(\delta,
  \overline{M}(J))\text{log}(K)/(np_n)$, then

  \begin{equation*}
    \mathbb{E}_{P_0}[R_{\hat{k}, n}] \leq c\left(\delta,
    \overline{M}(J)\right) \frac{1 + \text{log}(K)}{np_n}.
  \end{equation*}
  The same bound applies to $\mathbb{E}_{P_0}[T_{\tilde{k}, n}]$.
  Therefore, taking the expected values of the inequality in
  Equation~\eqref{target-finite-res}, we produce the desired finite-sample
  result in Equation~\eqref{paper:thm1:finite-sample-result} of the main text.
\end{proof}

\begin{proof}
  \textbf{Theorem~\ref{paper:thm1}, High-Dimensional Asymptotic Result.}

  The expected risk differences ratio's convergence follows directly from the
  main text's Equation~\eqref{paper:thm1:finite-sample-result} for some $\delta
  > 0$, so long as $c(\delta, \overline{M}(J))(1 + \text{log}(K))/(np_n
  \mathbb{E}_{P_0}[\tilde{\theta}_{p_n,n}(\tilde{k}_{p_n, n}) - \theta_0])
  \rightarrow 0$ as $n, J \rightarrow \infty$. Given the assumption in
  Kolmogorov asymptotics that $J/n \rightarrow m < \infty$ as $J, n \rightarrow
  \infty$, an equivalent condition is that $m(M_1+M_2)^2J(1 +
  \text{log}(K))/(p_n \mathbb{E}_{P_0}[\tilde{\theta}_{p_n,n}(\tilde{k}_{p_n,
  n}) - \theta_0]) \rightarrow 0$ as $n, J \rightarrow \infty$. Convergence in
  probability then follows from Lemma~\ref{lma2:convergence-in-prob}.

\end{proof}

Though there are minor adaptations to the assumptions to reflect the use of
high-dimensional asymptotics, the proof of
Corollary~\ref{paper:full-dataset-result} follows that of
Corollary~\ref{paper:full-dataset-result} in \citet{dudoit2005asymptotics}.

\begin{proof} \textbf{Corollary~\ref{paper:full-dataset-result}.}

  The asymptotic statement of the main text's
  Equation~\eqref{paper:cor1-asymptotic} is an immediate result of
  Theorem~\ref{paper:thm1}'s Equation~\eqref{paper:thm1:convergence-in-prob}.

  \begin{equation*}
    \frac{\tilde{\theta}_{p_n,n}(\hat{k}_{p_n,n}) - \theta_0}
    {\tilde{\theta}_{p_n,n}(\tilde{k}_{p_n,n})- \theta_0}
    \frac{\tilde{\theta}_{p_n,n}(\tilde{k}_{p_n,n})-\theta_0}
    {\tilde{\theta}_{n}(\tilde{k}_n)-\theta_0}
    \overset{P}{\rightarrow} 1.
  \end{equation*}

  Letting $Z_{1,n} \coloneqq (n(1-p_n))^\gamma(\tilde{\theta}_{p_n,n}(
  \tilde{k}_{p_n,n})-\theta_0)$ and $Z_{2, n} \coloneqq
  n^\gamma(\tilde{\theta}_{n}(\tilde{k}_n)-\theta_0)$, and assuming that the
  sufficient condition in Equation~\eqref{paper:suff-cond} of the main text
  holds, we find that $Z_{1, n} / Z_{2, n} \overset{d}{\rightarrow} 1$ by the
  Continuous Mapping Theorem. Then, notice that
  \begin{equation*}
    \frac{Z_{1, n}}{Z_{2, n}} =
    \frac{(1-p_n)^\gamma(\tilde{\theta}_{p_n,n}(
    \tilde{k}_{p_n,n})-\theta_0)}
    {\tilde{\theta}_{n}(\tilde{k}_n)-\theta_0},
  \end{equation*}
  which yields the desired sufficient condition when $p_n \rightarrow 0$. In
  the case of single-split validation, $Z_{2,n} \overset{d}{=} Z_{1,
  n/(1-p_n)}$, and so $Z_{2, n} \overset{d}{\rightarrow} Z$ implies that
  $(Z_{1, n}, Z_{2, n}) \overset{d}{\rightarrow} (Z, Z)$.

\end{proof}

\rone{
\begin{remark}
  \label{remark:unknown-mean}
  We assume throughout this work that $\mathbb{E}_{P_0}[X] = 0$ without loss of
  generality. In practice, however, the mean vector is generally unknown.
  Consider the uniformly bounded random vector $Y$ such that $X = Y -
  \mathbb{E}_{P_0}[Y]$. We might therefore consider using the demeaned random
  vector $\tilde{Y} = Y - \bar{Y}$ instead, where $\bar{Y}^{(j)} = 1/n \sum
  Y_i^{(j)}$. Employing $\tilde{Y}$ in place of $X$ in
  Lemma~\ref{lma1:bounded-loss-diff}, and denoting $\tilde{Z}_k \coloneqq
  L(\tilde{Y}; \hat{\Psi}_k(P_{n,B_n}^0)) - L(\tilde{Y}; \boldsymbol{\psi}_0)$,
  we find that $\mathbb{E}_{P_0}[\tilde{Z}_k | P_{n,B_n}^0, B_n] =
  \sum\sum(\psi_0^{(jl)}-\hat{\Psi}_k^{(jl)}(P_{n,B_n}^0))
  ((n-2)\psi_0^{(jl)}/n-\hat{\Psi}_k^{(jl)}(P_{n,B_n}^0))$. It then follows
  that, as $n \rightarrow \infty$, $\mathbb{E}_{P_0}[\tilde{Z}_k | P_{n,B_n}^0,
  B_n] = \mathbb{E}_{P_0}[Z_k | P_{n,B_n}^0, B_n]$. The asymptotic results of
  Theorem~\ref{paper:thm1} and Corollary~\ref{paper:full-dataset-result} are
  therefore achievable when $\mathbb{E}_{P_0}[Y]$ is unknown. The same cannot
  be said for the finite sample result of Theorem~\ref{paper:thm1}:
  $\mathbb{E}_{P_0}[\tilde{Z}_k | P_{n,B_n}^0, B_n]$ is not strictly
  nonegative. For large enough values of $n$, however, we do expect these
  finite bounds to be approximately correct.
\end{remark}
}

\section{Additional Figures and Tables}

\FloatBarrier

\begin{table}
  \centering
  \begin{tabular}{c c} 
    \hline
    Estimator & Hyperparameters \\ [0.5ex] 
    \hline\hline
    Sample covariance matrix & Not applicable \\ 
    \hline
    Hard thresholding \citep{bickelb:2008} & ${\text{Thresholds}=\{0.1, 0.2, \ldots, 1.0\}}$ \\
    \hline
    SCAD thresholding \citep{fan:2001,rothman:2009} & ${\text{Thresholds} = \{0.1, 0.2, \ldots, 1.0\}}$ \\
    \hline
    Adaptive LASSO \citep{rothman:2009} & ${\text{Thresholds} =
    \{0.1,0.2,\ldots,0.5\}}$; \\
                                        & ${\text{exponential weights} = \{0.1,0.2,\ldots,0.5\}}$  \\
    \hline
    Banding \citep{bickela:2008} & $\text{Bands} = \{1, 2, \ldots, 5\}$  \\
    \hline
    Tapering \citep{cai:2010} & $\text{Bands} = \{2, 4, \ldots, 10\}$\\
    \hline
    Linear shrinkage \citep{ledoit:2004} & Not applicable \\
    \hline
    Dense linear shrinkage \citep{shafer:2005} & Not applicable \\
    \hline
    Nonlinear shrinkage \citep{ledoit:2020b} & Not applicable \\
    \hline
    POET \citep{fan:2013} using hard thresholding & ${\text{Latent
    factors}~=~\{1,2,\ldots,5\}}$; \\
                                                  & ${\text{thresholds}~=~\{0.1,0.2,0.3\}}$ \\
    \hline
  \end{tabular}
  \caption{Families of candidate estimators used by cvCovEst in the simulation
    study ($74$ distinct estimators in total)}
  \label{table:sim-hyperparams}
\end{table}

\begin{figure}
  \centering
  \includegraphics[width=0.7\textwidth]{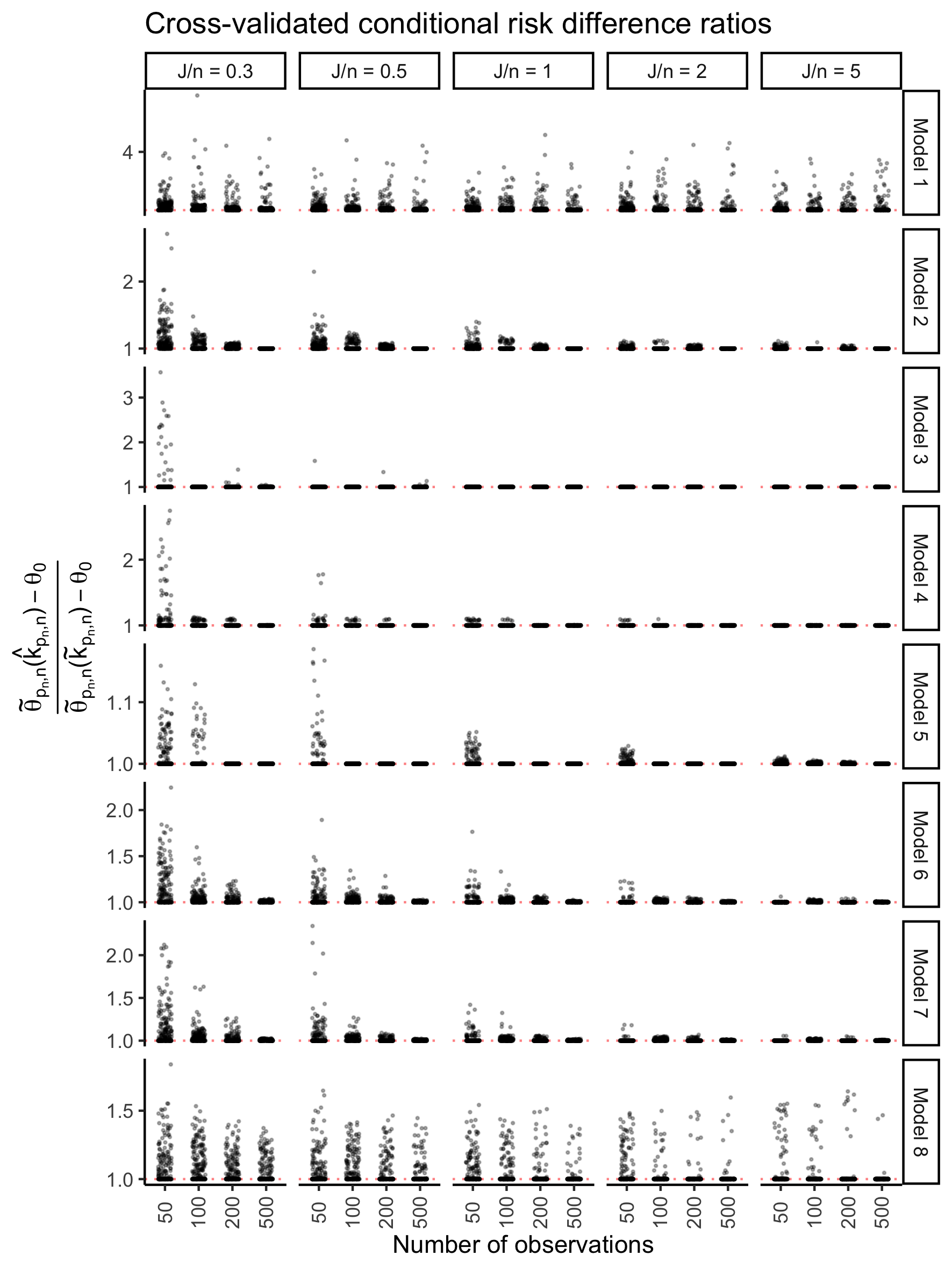}
  \caption{
  Comparison of the cross-validated ($\hat{k}_{p_n,n}$) and cross-validated
  oracle ($\tilde{k}_{p_n,n}$) selections' cross-validated conditional risk
  differences. The proposed cross-validated selection procedure achieves
  asymptotic equivalence in most settings for relatively small sample sizes and
  numbers of features.}
  \label{sim:conv-in-prob}
\end{figure}

\begin{figure}
  \centering
  \includegraphics[width=0.8\textwidth]{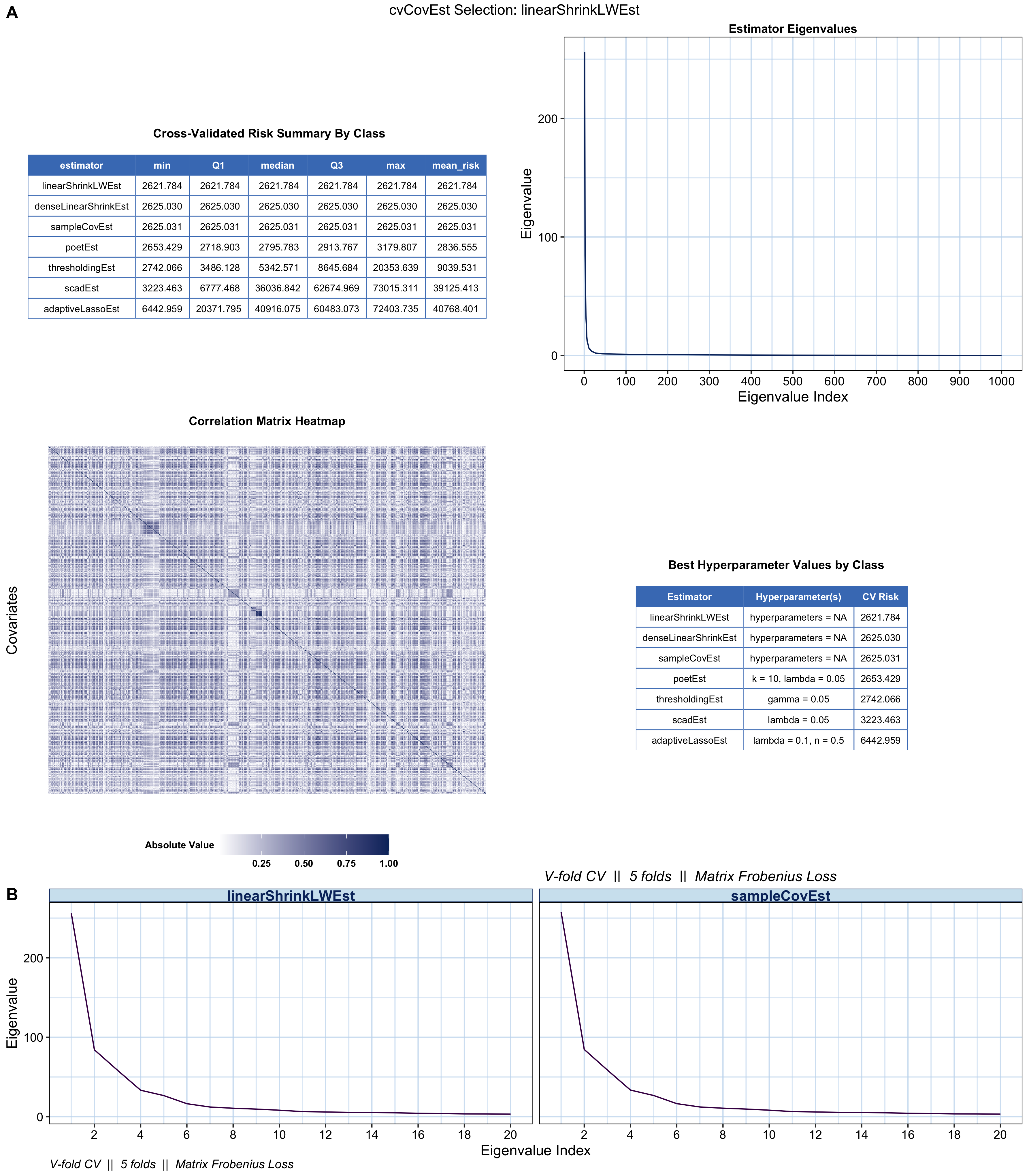}
  \caption{Zeisel dataset: Diagnostic plots and tables generated using the
  cvCovEst R package.
  The components in \textbf{(A)} can be interpreted in the same manner as the
  previous figure, with the exception of the top-left panel. In this table, a
  five number summary of the cross-validated Frobenius risk is given for each
  class of estimators considered across all possible combinations of
  hyperparameters, if any. It is clear from the tables in \textbf{(A)} that the
  cvCovEst selection is essentially equivalent in terms of cross-validated risk
  to the sample covariance matrix. The plots in \textbf{(B)} further highlight
  that the 20 leading eigenvalue of the cvCovEst estimate and the sample
  covariance matrix are indistinguishable.}
  \label{fig:zeisel-diagnostics}
\end{figure}

\begin{table}
  \centering
  \begin{tabular}{c c c} 
    \hline
    Estimator & Hyperparameters \\ [0.5ex] 
    \hline\hline
    Sample covariance matrix & NA \\ 
    \hline
    Hard thresholding & Thresholds $= \{0.05, 0.10, \ldots, 0.30\}$ \\
    \hline
    SCAD thresholding & Thresholds $ = \{0.05, 0.10, \ldots, 0.50\}$ \\
    \hline
    Adaptive LASSO & Thresholds $= \{0.1, 0.2, \ldots, 0.5\}$; \\
                   & exponential weights $ = \{0.1, 0.2, \ldots, 0.5\}$  \\
    \hline
    Linear shrinkage & NA \\
    \hline
    Dense linear shrinkage & NA \\
    \hline
    POET using hard thresholding & Latent factors$ = \{5, 6, \ldots, 10\}$; \\
                                 & thresholds $= \{0.05, 0.10, \ldots, 0.3\}$ \\
    \hline
  \end{tabular}
  \caption{Families of candidate estimators used in single-cell
  transcriptomic data analyses}
  \label{table:scrnaseq-hyperparams}
\end{table}

\FloatBarrier
\section{Extended Simulation Results}
\label{spectral-norm-sim}

We compared the estimates generated by our method against those of the
individual candidate procedures using the simulated data sets. This was
accomplished by computing the Frobenius norm of each estimate against the
corresponding true covariance matrix. The mean norms over all simulations were
then computed for each covariance matrix estimation procedure, again stratified
by $n$, $J/n$, and the covariance matrix model (Figure~\ref{mean-frobenius}).
Our CV scheme was used to select hyperparameters of these competing approaches
where necessary. As stated previously, our CV approach is generally equivalent
to these estimators' hyperparameter selection procedures. The hyperparameters
considered are provided in Table~\ref{table:norm-hyperparams}. Where
appropriate, the competing methods' hyperparameters are more varied than those
used by cvCovEst, reflecting more aggressive estimation procedures that one
might employ when only using a single family of estimators.

\begin{table}
  \centering
  \begin{tabular}{c c c} 
    \hline
    Estimator & Hyperparameters \\ [0.5ex] 
    \hline\hline
    Sample covariance matrix & NA \\ 
    \hline
    Hard thresholding & Thresholds $= \{0.05, 0.10, \ldots, 1.00\}$ \\
    \hline
    SCAD thresholding & Thresholds $= \{0.05, 0.10, \ldots, 1.00\}$ \\
    \hline
    Adaptive LASSO & Thresholds $= \{0.1, 0.2, \ldots, 0.5\}$; \\
                   & exponential weights $= \{0.1, 0.2, \ldots, 0.5\}$  \\
    \hline
    Banding & Bands $= \{1, 2, \ldots, 10\}$  \\
    \hline
    Tapering &  Bands $ = \{2, 4, \ldots, 10\}$\\
    \hline
    Linear shrinkage & NA \\
    \hline
    Dense linear shrinkage & NA \\
    \hline
    Nonlinear shrinkage & NA \\
    \hline
    POET using hard thresholding & Latent factors $ = \{1, 2, \ldots, 10\}$; \\
                                 & thresholds $= \{0.1, 0.2, \ldots, 1.0\}$ \\
    \hline
  \end{tabular}
  \caption{Families of candidate estimators compared against the cross-validated
  loss-based estimator selection procedure. Note that the library of candidate
  estimators used by the proposed method is provided in
  Table~\ref{table:sim-hyperparams}}
  \label{table:norm-hyperparams}
\end{table}

\begin{figure}
  \centering
  \includegraphics[width=\textwidth]{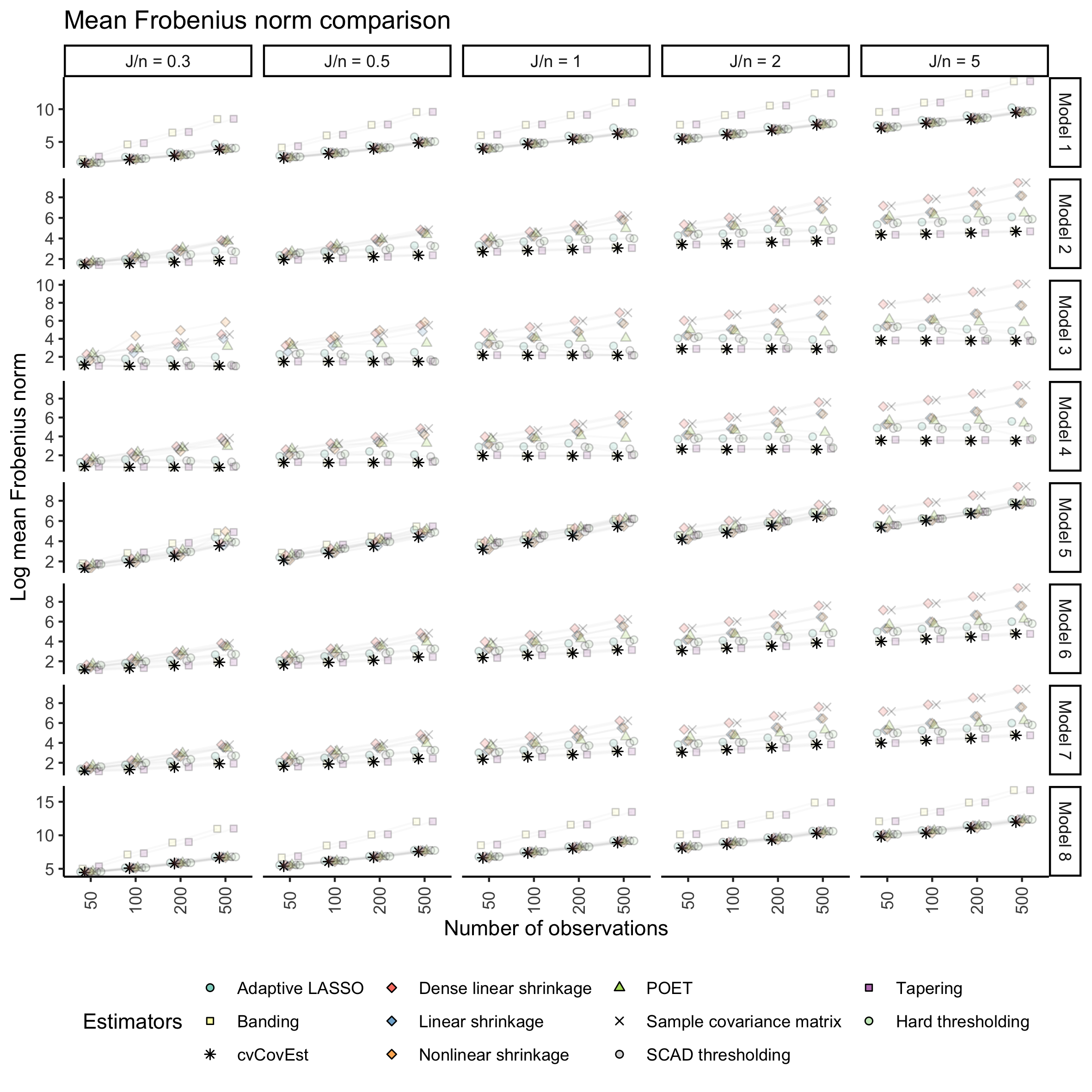}
  \caption{
    Comparison of competing, bespoke covariance matrix estimation procedures to
    our cross-validated selection approach in terms of the Monte Carlo mean
    Frobenius norm under a variety of data-generating processes. Note that the
    scales of the y-axis are tailored to the covariance matrix model.
  }
  \label{mean-frobenius}
\end{figure}

We repeated this benchmarking experiment using the spectral norm to assess the
accuracy of our estimation procedure with respect to the leading eigenvalue of
the covariance matrix. Recall that the spectral norm of a square matrix is
defined as it's largest absolute eigenvalue. Though our theoretical results do
not relate to to this norm, outcomes similar to those in
Figure~\ref{mean-frobenius} are expected given the relationship between these
two norms for reasons previously described.

\begin{figure}
  \centering
  \includegraphics[width=\textwidth]{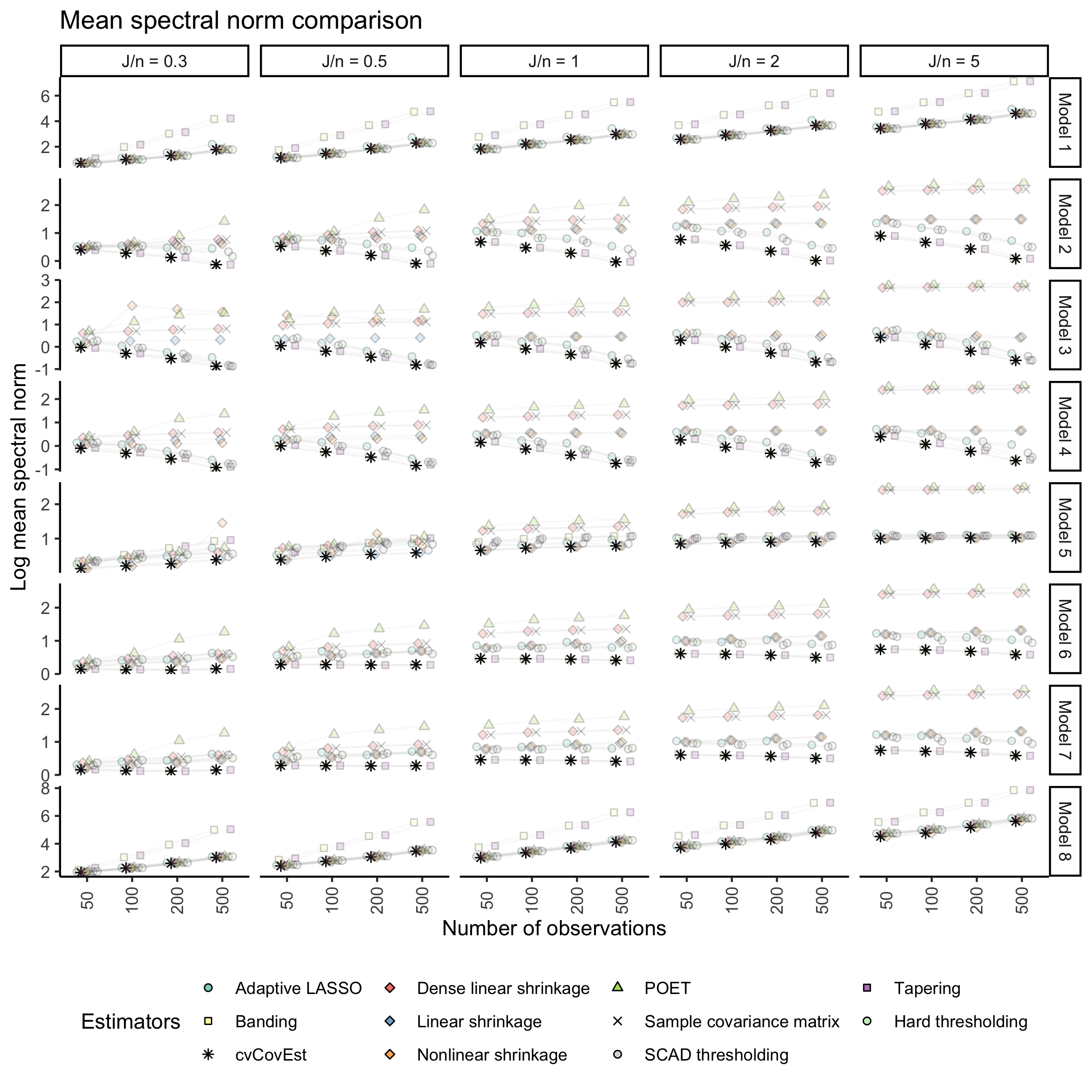}
  \caption{Comparison of competing, bespoke covariance matrix estimation
    procedures to our cross-validated selection approach in terms of the Monte
    Carlo mean spectral norm under a variety of data-generating processes. Note
    that the scales of the y-axis are tailored to the covariance matrix model.}
  \label{mean-spectral}
\end{figure}

The results, presented in Figures~\ref{mean-frobenius} and \ref{mean-spectral},
demonstrate that our estimator selection procedure performs at least as well as
the best alternative estimation strategy. This suggests that procedures
dedicated to or relying upon the accurate estimation of leading eigenvalues and
eigenvectors, like principal component analysis and latent variable estimation,
might benefit from the integration of our cross-validated covariance matrix
estimation framework.

\FloatBarrier

\section{Candidate Covariance Matrix Estimators}\label{lit}

Using the proposed cross-validated selection procedure effectively requires
a large and diverse set of candidate covariance matrix estimators. In
this spirit, we provide in the sequel a brief overview of select covariance
matrix estimators that have proven to be effective in a variety of settings.
Note, however, that the proposed selection framework need not be limited to
those described here. Thorough reviews of estimators have been provided
by~\citet{pourahmadi:2013}, ~\citet{fan:2016},
and~\citet{ledoit:2020a}, to name only a few.

\subsection{Thresholding Estimators}\label{thresh-est}

An often natural, simplifying assumption about the true covariance matrix's
structure is that it is sparse, that is, a non-negligible portion of its
off-diagonal elements have a value near zero or equal to zero. This assumption
is not altogether unreasonable: Given a system of numerous variables, it seems
unlikely in many settings that a majority of these variables would depend on one
another.

The class of generalized thresholding estimators~\citep{bickelb:2008,
rothman:2009,cai:2011} is one collection of covariance matrix estimators
based upon this structural assumption. Given the sample covariance matrix
$\mathbf{S}_n$, the generalized thresholding estimator is defined as
\begin{equation*}
  \hat{\Psi}_{\text{gt}}(P_n; t) \coloneqq \left\{t\left(S_n^{(jl)}\right):
  \; j,l = 1, \ldots, J\right\}, 
\end{equation*}
where $t(\cdot)$ is a thresholding function that often requires one or more
hyperparameters dictating the amount of regularization. The hard, soft,
smoothly clipped absolute deviation (SCAD) \citep{fan:2001}, and adaptive
LASSO \citep{rothman:2009} functions are among the collection of suitable
thresholding operators. As an example, the hard-thresholding function is
defined as $t_u(S_n^{(jl)}) \equiv S_n^{(jl)} I(S_n^{(jl)} > u)$ for some
threshold $u > 0$. \citet{cai:2011} also demonstrated that element-specific
thresholding functions might be useful when the features' variances are highly
variable. The hyperparameters for any specific thresholding function are often
selected via CV. Regardless of the choice of $t(\cdot)$, these estimators
preserve the symmetry of the sample covariance matrix and are invariant under
permutations of the features' order.

\citet{bickelb:2008}~and~\citet{rothman:2009} have shown that these
estimators are consistent under the spectral norm (which is defined as a square
matrix's largest absolute eigenvalue), assuming that $\log J/n \rightarrow 0$,
that the observations' marginal distributions satisfy a tail condition, and
that the true covariance matrix is a member of the class of matrices satisfying
a particular notion of ``approximate sparsity.'' \citet{cai:2011} have
derived similar results for their entry-specific thresholding estimator over an
even broader parameter space of sparse covariance matrices.

\subsection{Banding and Tapering Estimators}

A family of estimators related to thresholding estimators are banding and
tapering estimators~\citep{bickela:2008,cai:2010}. Like thresholding
estimators, these estimators rely on the assumption that the true covariance
matrix is sparse. However, the structural assumption of these estimators on
$\boldsymbol\psi_0$ is much more rigid than that of thresholding estimators.
Specifically, such estimators assume that the true covariance matrix is a
\textit{band matrix}, that is, a sparse matrix whose non-zero entries are
concentrated about the diagonal. These estimators therefore require a natural
ordering of the observations' features, operating under the assumption that
``distant'' variables are uncorrelated. Such structure is often encountered in
longitudinal and time-series data.

Given the sample covariance matrix  $\mathbf{S}_n$, the banding estimator of
\citet{bickela:2008} is defined as
\begin{equation*}
  \hat{\Psi}_{\text{band}}(P_n; b) \coloneqq \left\{
    S_n^{(jl)} \mathbb{I}(\lvert j - l \rvert \leq b):
    \; j,l = 1, \ldots, J \right\},
\end{equation*}
where $b$ is a hyperparameter that determines the number of bands to retain
from the sample covariance matrix and is chosen via a CV procedure. For the
class of ``bandable'' covariance matrices, i.e., the set of well-conditioned
matrices whose elements not in the central bands of the matrix decay rapidly,
this banding estimator has been shown to be uniformly consistent in the
spectral norm so long as $\log(J)/n \rightarrow 0$.

The tapering estimator of~\citet{cai:2010} is the smooth generalization of
the banding estimator, gradually shrinking the off-diagonal bands of the sample
covariance matrix towards zero. It is defined as
\begin{equation*}
    \hat{\Psi}_{\text{tap}}(P_n; b)
    \coloneqq \mathbf{W}_{b} \circ \mathbf{S}_n,\\
\end{equation*}
for some weight matrix $\mathbf{W}_b$. Here, ``$\circ$'' denotes the Hadamard
(element-wise) matrix product. Clearly, letting $W^{(jl)}_b = \mathbb{I}(\lvert
j - l \rvert \leq b)$ for some positive integer $b$ results in the banding
estimator. A popular weighting scheme derived by \citet{cai:2010} is
\begin{equation*}
  W^{(jl)}_b \coloneqq \begin{cases}
    1, &\text{when $\lvert j - l \rvert \leq \frac{b}{2}$}  \\
    2 - \frac{\lvert j - l \rvert}{b},
      &\text{when $\frac{b}{2} < \lvert j - l \rvert \leq b$} \\
    0, & \text{otherwise}
\end{cases}
,
\end{equation*}
which we use in our simulation study presented in Section~4 of the main text.
\citet{cai:2010} also derived the optimal rates of convergence for this
estimator under the Frobenius and spectral norms, considering a class of
bandable covariance matrices that is more general than that considered by
\citet{bickela:2008}: The smallest eigenvalue of the covariance matrices in
this class can take on a value of zero. However, this estimator does not
improve upon the bounds set by the banding estimator.

\subsection{Shrinkage Estimators}

We next consider the linear and non-linear shrinkage estimators inspired by
Stein's work on empirical Bayesian methods. These estimators are
rotation-equivariant, shrinking the eigenvalues of the sample covariance matrix
towards a set of target values, whilst leaving its eigenvectors untouched. In
doing so, the resultant estimators are better-conditioned than the
sample covariance matrix in a manner guaranteeing that the resultant covariance
matrix estimator be non-singular. Further, these estimators do not rely on 
sparsity assumptions about the true covariance matrix, setting them apart from
those previously discussed.

One of the first shrinkage estimators, the linear shrinkage estimator of the
sample covariance matrix, was proposed by~\citet{ledoit:2004}. This
estimator is defined as the convex combination of the sample covariance matrix
and the identity matrix. Hence, it represents a compromise between
$\mathbf{S}_n$, an unbiased but highly variable estimator of $\boldsymbol{\psi}_0$ in
high dimensions, and $\mathbf{I}_{J \times J}$, a woefully biased but fixed
estimator.  \citet{ledoit:2004} found that, under mild conditions, the
asymptotically optimal shrinkage intensity with respect to the scaled (by $J$)
Frobenius norm can be estimated consistently in high dimensions. This estimator
is defined as
\begin{equation*}
  \hat{\Psi}_{\text{ls}}(P_n) \coloneqq
    \frac{b^2_n}{d^2_n}m_n \mathbf{I} +
    \frac{a_n^2}{d_n^2} \mathbf{S}_n,
\end{equation*}
for $m_n = \text{tr}(\mathbf{S}_n) / J$,
$d^2_n = \lVert \mathbf{S}_n - m_n\mathbf{I} \rVert^2_{F, 1/J}$,
$\bar{b}^2_n = n^{-2} \sum_i \lVert X_i X_i^\top - \mathbf{S}_n
\rVert^2_{F, 1/J}$, $b^2_n = \min(\bar{b}^2_n, d^2_n)$, and $a^2_n = d^2_n -
b^2_n$.

Bespoke shrinkage targets may be used in place of the identity. For example,
one might consider a dense matrix target whose diagonal elements are the
average of the sample covariance matrix's diagonal elements and whose
off-diagonal elements are equal to the average of all the sample covariance
matrix's off-diagonal elements. For the sake of brevity, discussion of such
estimators is omitted, but examples are provided by, among
others,~\citet{ledoit:2003} and~\citet{shafer:2005}, particularly for
use in financial economics and statistical genomics applications, respectively.

When assumptions about the true covariance matrix's structure are unfounded, it
can become impossible to select an appropriate linear shrinkage target.
Instead, one might consider generalizing these estimators to shrink the
eigenvalues of the sample covariance matrix in a non-linear fashion. That is, an
estimator that shrinks the sample covariance matrix's eigenvalues not by a
common shrinkage factor (as with linear shrinkage estimators) but with
shrinkage factors tailored to each sample eigenvalue. As with the
aforementioned linear shrinkage estimators, such non-linear shrinkage
estimators produce positive-definite estimates so long as the shrunken sample
eigenvalues are positive and rotation-equivariant. These estimators belong to a
class initially introduced by~\citet{stein:1986} and have since seen a
resurgence in the work of~\citet{ledoit:2012,ledoit:2015}. More
recently,~\citet{ledoit:2020b} derived an analytical non-linear shrinkage
estimator that is asymptotically optimal in high dimensions and more
computationally efficient than their previously formulated estimators.

\subsection{Estimators Based on Factor Models}

Covariance matrix estimators based on factor models form another broad family
of estimators that do not assume sparsity of the true covariance matrix. Often
encountered in econometrics and finance, these estimators utilize the operating
assumption that the dataset's observations are functions of a few common, often
latent, factors. The factor model can be described as follows:
\begin{equation}\label{factor-model}
  X = \mu + \boldsymbol\beta F + \epsilon,
\end{equation}
where $X_{J \times 1}$ represents a random observation, $\mu_{J \times 1}$ a
mean vector, $\boldsymbol{\beta}_{J \times L}$ a matrix of factor coefficients,
$F_{L \times 1}$ a random vector of $L$ common factors, and $\epsilon_{J \times
1}$ a random error vector. Assuming that $F$ and $\epsilon$ are uncorrelated,
the covariance matrix of $X$ is given by
\begin{equation}\label{fm-cov}
  \boldsymbol{\psi} = \boldsymbol{\beta}\text{Cov}(F)\boldsymbol{\beta}^\top +
  \boldsymbol{\psi}_\epsilon,
\end{equation}
where $\psi_\epsilon$ is the covariance matrix of the random error.

For a review on estimating the covariance matrix in the presence of observable
factors, see \citet{fan:2016}. We now briefly discuss the estimation of
$\boldsymbol\psi$ when the factors are unobservable. Notice that when
$\boldsymbol{\psi}_\epsilon$ is not assumed to be diagonal, the decomposition
of $\boldsymbol{\psi}$ in Equation~\eqref{fm-cov} is not identifiable for fixed
$n$ and $J$, since the random vector $X$ is the only observed component of the
model. By letting $J \rightarrow \infty$, and assuming that the eigenvalues of
$\boldsymbol{\psi}_\epsilon$ are uniformly bounded or grow at a slow rate
relative to $J$ and that the eigenvalues of $(1/J)\;
\boldsymbol{\beta}^\top\boldsymbol{\beta}$ are uniformly bounded away from zero
and infinity, it can be shown that
$\boldsymbol{\beta}\text{Cov}(F)\boldsymbol{\beta}^\top$ is asymptotically
identifiable \citep{cai:2010}. It follows from these assumptions that the
signal in the factors increases as the number of features increases, while the
noise contributed by the error term remains constant. The eigenvalues
associated with $\boldsymbol{\beta}\text{Cov}(F)\boldsymbol{\beta}^\top$
therefore become easy to differentiate from those of
$\boldsymbol\psi_\epsilon$.

Now, even with $\boldsymbol{\beta}\text{Cov}(F)\boldsymbol{\beta}^\top$ being
asymptotically identifiable, $\boldsymbol{\beta}$ and $F$ cannot be
distinguished. As a solution, the following constraint is imposed on $F$:
$\text{Cov}(F) = \mathbf{I}_{L \times L}$. It then follows that
\begin{equation*}
  \boldsymbol{\psi} = \boldsymbol{\beta}\boldsymbol{\beta}^\top +
    \boldsymbol\psi_\epsilon.
\end{equation*}

Under the additional assumption that the columns of $\boldsymbol\beta$ be
orthogonal, \citet{fan:2013} found that the leading $L$ eigenvalues of
$\boldsymbol{\psi}$ are spiked, meaning that they are bounded below by some
constant \citep{johnstone2001distribution}, and grow at rate $O(J)$ as the
dimension of $\boldsymbol{\psi}$ increases. The remaining $J-L$ eigenvalues are
then either bounded above or grow slowly. This implies that the latent factors
and their loadings can be approximated via the eigenvalues and eigenvectors of
$\boldsymbol\psi$.

\citet{fan:2013} therefore proposed the Principal Orthogonal compleEment
Thresholding (POET) estimator of $\boldsymbol\psi$, which was motivated by the
spectral decomposition of the sample covariance matrix
\begin{equation*}
  \begin{split}
    \mathbf{S}_n & = \sum_{j=1}^J \lambda_j V_{\cdot, j} V_{\cdot, j}^\top \\
    & \approx \sum_{j=1}^L \beta_{\cdot, j}\beta_{\cdot, j}^\top +
      \sum_{j=L+1}^J \lambda_j V_{\cdot, j} V_{\cdot, j}^\top,
  \end{split}
\end{equation*}
where $\lambda_j$ and $V_{\cdot, j}$ are the $j^\text{th}$ eigenvalues and
eigenvectors of $\mathbf{S}_n$, respectively, and $\beta_{\cdot, j}$ is the
$j^\text{th}$ column of $\boldsymbol{\beta}$. For ease of notation, we denote
the second term by $\mathbf{S}_\epsilon$ and refer to this matrix as the
principal orthogonal complement. The estimator for $L$ latent variables is then
defined as

\begin{equation*}
  \hat{\Psi}_{\text{POET}}(P_n; L, s) \coloneqq \sum_{j=1}^L \lambda_j V_{\cdot,
j} V_{\cdot, j}^\top + \mathbf{T}_{\epsilon, s},
\end{equation*}
where $\mathbf{T}_{\epsilon, s}$ is the generalized thresholding matrix of
$\mathbf{S}_\epsilon$ 
\begin{equation*}
  T_{\epsilon, s}^{(jl)} \coloneqq \begin{cases}
    S_\epsilon^{(jj)}, & \text{when $j=l$} \\
    s\left(S_\epsilon^{(jl)}\right), & \text{otherwise}
  \end{cases},
\end{equation*}
for some generalized thresholding function $s$.

Although this estimator is computationally efficient, the assumptions encoding
the factor based model under which it is derived are such that the latent
features' eigenvalues grow in $J$. This results in a poor convergence rate under
the spectral norm \citep{fan:2016}.

\bibliographystyle{jasa3}
{\footnotesize \bibliography{Bibliography-MM-MC}}